\documentclass[journal,two column]{IEEEtran}
\IEEEoverridecommandlockouts
\usepackage{amsmath,amsfonts}
\usepackage{algorithmic}
\usepackage{algorithm}
\usepackage{array}
\usepackage{caption}
\usepackage[caption=false,font=footnotesize,labelfont=rm,textfont=rm]{subfig}
\DeclareSubrefFormat{myformat}{#1(#2)}
\usepackage{textcomp}
\usepackage{stfloats}
\usepackage{url}
\usepackage{color}
\usepackage{verbatim}
\usepackage{graphicx}
\usepackage{cite}
\usepackage{tabularx}  % 自动调整列宽
\usepackage{siunitx} % 用于单位格式化
\usepackage{booktabs}   % 专业表格线
\usepackage{amsmath}   % 数学符号支持
\usepackage{hyperref}
\hypersetup{colorlinks=true, citecolor=blue}
\usepackage{amssymb}
\usepackage{psfrag}
\usepackage{epstopdf}
\usepackage{authblk}
\usepackage{amsthm} 
\usepackage{lipsum}
\usepackage{mathtools}
\usepackage{cuted}
\usepackage{threeparttable}
\usepackage{hyperref}
\hypersetup{colorlinks=true, citecolor=blue}
\usepackage{adjustbox}
\usepackage{multirow}
\makeatother

\usepackage{xpatch}  
\makeatletter  
\xpatchcmd{\@thm}{\thm@headpunct{.}}{\thm@headpunct{}}{}{}

\begin{document}
% Bold lowercase: syntax \nb# where # is {a ... z, 0,1}, bf means bold fonts
\def\mba{{\mathbf{a}}} %% 小写加粗
\def\mbb{{\mathbf{b}}}
\def\mbc{{\mathbf{c}}}
\def\mbd{{\mathbf{d}}}
\def\mbe{{\mathbf{e}}}
\def\mbf{{\mathbf{f}}}
\def\mbg{{\mathbf{g}}}
\def\mbh{{\mathbf{h}}}
\def\mbi{{\mathbf{i}}}
\def\mbj{{\mathbf{j}}}
\def\mbk{{\mathbf{k}}}
\def\mbl{{\mathbf{l}}}
\def\mbm{{\mathbf{m}}}
\def\mbn{{\mathbf{n}}}
\def\mbo{{\mathbf{o}}}
\def\mbp{{\mathbf{p}}}
\def\mbq{{\mathbf{q}}}
\def\mbr{{\mathbf{r}}}
\def\mbs{{\mathbf{s}}}
\def\mbt{{\mathbf{t}}}
\def\mbu{{\mathbf{u}}}
\def\mbv{{\mathbf{v}}}
\def\mbw{{\mathbf{w}}}
\def\mbx{{\mathbf{x}}}
\def\mby{{\mathbf{y}}}
\def\mbz{{\mathbf{z}}}
\def\mb0{{\mathbf{0}}}
\def\mb1{{\mathbf{1}}}

% Bold capital letters: syntax \nb# where # is {A ... Z}
\def\mbA{{\mathbf{A}}} %% 大写加粗
\def\mbB{{\mathbf{B}}}
\def\mbC{{\mathbf{C}}}
\def\mbD{{\mathbf{D}}}
\def\mbE{{\mathbf{E}}}
\def\mbF{{\mathbf{F}}}
\def\mbG{{\mathbf{G}}}
\def\mbH{{\mathbf{H}}}
\def\mbI{{\mathbf{I}}}
\def\mbJ{{\mathbf{J}}}
\def\mbK{{\mathbf{K}}}
\def\mbL{{\mathbf{L}}}
\def\mbM{{\mathbf{M}}}
\def\mbN{{\mathbf{N}}}
\def\mbO{{\mathbf{O}}}
\def\mbP{{\mathbf{P}}}
\def\mbQ{{\mathbf{Q}}}
\def\mbR{{\mathbf{R}}}
\def\mbS{{\mathbf{S}}}
\def\mbT{{\mathbf{T}}}
\def\mbU{{\mathbf{U}}}
\def\mbV{{\mathbf{V}}}
\def\mbW{{\mathbf{W}}}
\def\mbX{{\mathbf{X}}}
\def\mbY{{\mathbf{Y}}}
\def\mbZ{{\mathbf{Z}}}

% \mathcal: syntax \mc# where # is {A ... Z}, cal means calligraphy
\def\mcA{{\mathcal{A}}}
\def\mcB{{\mathcal{B}}}
\def\mcC{{\mathcal{C}}}
\def\mcD{{\mathcal{D}}}
\def\mcE{{\mathcal{E}}}
\def\mcF{{\mathcal{F}}}
\def\mcG{{\mathcal{G}}}
\def\mcH{{\mathcal{H}}}
\def\mcI{{\mathcal{I}}}
\def\mcJ{{\mathcal{J}}}
\def\mcK{{\mathcal{K}}}
\def\mcL{{\mathcal{L}}}
\def\mcM{{\mathcal{M}}}
\def\mcN{{\mathcal{N}}}
\def\mcO{{\mathcal{O}}}
\def\mcP{{\mathcal{P}}}
\def\mcQ{{\mathcal{Q}}}
\def\mcR{{\mathcal{R}}}
\def\mcS{{\mathcal{S}}}
\def\mcT{{\mathcal{T}}}
\def\mcU{{\mathcal{U}}}
\def\mcV{{\mathcal{V}}}
\def\mcW{{\mathcal{W}}}
\def\mcX{{\mathcal{X}}}
\def\mcY{{\mathcal{Y}}}
\def\ncalZ{{\mathcal{Z}}}

% \mathbb: syntax \mbb# where # is {A ... Z}, bb means blackboard bold
\def\mbbA{{\mathbb{A}}}
\def\mbbB{{\mathbb{B}}}
\def\mbbC{{\mathbb{C}}}
\def\mbbD{{\mathbb{D}}}
\def\mbbE{{\mathbb{E}}}
\def\mbbF{{\mathbb{F}}}
\def\mbbG{{\mathbb{G}}}
\def\mbbH{{\mathbb{H}}}
\def\mbbI{{\mathbb{I}}}
\def\mbbJ{{\mathbb{J}}}
\def\mbbK{{\mathbb{K}}}
\def\mbbL{{\mathbb{L}}}
\def\mbbM{{\mathbb{M}}}
\def\mbbN{{\mathbb{N}}}
\def\mbbO{{\mathbb{O}}}
\def\mbbP{{\mathbb{P}}}
\def\mbbQ{{\mathbb{Q}}}
\def\mbbR{{\mathbb{R}}}
\def\mbbS{{\mathbb{S}}}
\def\mbbT{{\mathbb{T}}}
\def\mbbU{{\mathbb{U}}}
\def\mbbV{{\mathbb{V}}}
\def\mbbW{{\mathbb{W}}}
\def\mbbX{{\mathbb{X}}}
\def\mbbY{{\mathbb{Y}}}
\def\nbbZ{{\mathbb{Z}}}

% \mathfrak: math fraktur
\def\mfrakR{{\mathfrak{R}}}

% Roman: {\rm } syntax \mrm# where # is {a ... z}, rm means roman
\def\mrma{{\rm a}}
\def\mrmb{{\rm b}}
\def\mrmc{{\rm c}}
\def\mrmd{{\rm d}}
\def\mrme{{\rm e}}
\def\mrmf{{\rm f}}
\def\mrmg{{\rm g}}
\def\mrmh{{\rm h}}
\def\mrmi{{\rm i}}
\def\mrmj{{\rm j}}
\def\mrmk{{\rm k}}
\def\mrml{{\rm l}}
\def\mrmm{{\rm m}}
\def\mrmn{{\rm n}}
\def\mrmo{{\rm o}}
\def\mrmp{{\rm p}}
\def\mrmq{{\rm q}}
\def\mrmr{{\rm r}}
\def\mrms{{\rm s}}
\def\mrmt{{\rm t}}
\def\mrmu{{\rm u}}
\def\mrmv{{\rm v}}
\def\mrmw{{\rm w}}
\def\mrmx{{\rm x}}
\def\mrmy{{\rm y}}
\def\mrmz{{\rm z}}

% Theorems etc.
\newtheorem{lemma}{Lemma}
\newtheorem{definition}{Definition}
\newtheorem{remark}{Remark}
\newtheorem{theorem}{Theorem}
\newtheorem{proposition}{Proposition}
\newtheorem{corollary}{Corollary}
\newtheorem{example}{Example}
\newtheorem{assumption}{Assumption}

%%%
\newcommand{\upcite}[1]{\textsuperscript{\textsuperscript{\cite{#1}}}} 
\newcommand{\romann}[1]{\uppercase\expandafter{\romannumeral #1}} 
\newcommand{\Romann}[1]{\expandafter{\romannumeral #1}} 
\newcommand{\colorb}[1]{{\color{blue} #1}} 
\newcommand{\colorr}[1]{{\color{red} #1}}

%%%%%%%% Backwards compatibility

\newcommand{\ceil}[1]{\lceil #1\rceil} % ceiling

\def\argmin{\operatorname{arg~min}}
\def\argmax{\operatorname{arg~max}}

% sinc, cosc
\def\sinc{{\rm sinc}}
\def\cosc{{\rm cosc}}

% special symbol
\def\larrow{\leftarrow}
\def\rarrow{\rightarrow}
\def\triequ{\triangleq}
\def\simequ{\simeq}

\def\figref#1{Fig.\,\ref{#1}}%
\def\tabref#1{Table\,\ref{#1}}%
\def\equref#1{(\ref{#1})}%
\def\appref#1{Appendix\,\ref{#1}}%
\def\lemref#1{Lemma\,\ref{#1}}%
\def\defref#1{Definition\,\ref{#1}}%
\def\theref#1{Theorem\,\ref{#1}}%
\def\remref#1{Remark\,\ref{#1}}%
\def\secref#1{Sec.\,\ref{#1}}%

\def\assref#1{Assumption\,\ref{#1}}%
\def\ie{{\em i.e.}}
\def\eg{{\em e.g.}}
\def\rme{{\rm e}}
\def\rmd{{\rm d}}
\def\dB{{\rm dB}}
\def\x{\times}
\def\T{\intercal}
\def\H{\dagger}
\def\wbar{\overline}
\def\what{\widehat}
\def\d{{\rm d}}
\def\E{{\mathbb E}}
\def\pd{\partial}
\def\e{{\rm e}}
\def\1{\mathbbmtt{1}}
\def\var{\operatorname{Var}}
\def\cov{\operatorname{Cov}}
\def\mean{\operatorname{mean}}
\def\P{{\mathbb P}}

\def\R{{\mathbb R}}

\def\erfc{\operatorname{erfc}}
\def\erf{\operatorname{erf}}
\def\opt{\mathrm{opt}}

\def\sinr{\mathtt{SINR}}   % Signal to interference plus noise ratio
\def\snr{\mathtt{SNR}}
\def\sir{\mathtt{SIR}}
\def\scnr{\mathtt{SCNR}}
\def\ase{\mathtt{ASE}}
\def\se{\mathtt{SE}}

\title{Offset Pointing for Energy-efficient Reception in Underwater Optical Wireless Communication: Modeling and Performance Analysis}
\author{
Qiyu~Ma,~\IEEEmembership{Student Member,~IEEE, }
Jiajie Xu,~\IEEEmembership{ Member, IEEE} 
and~Mohamed-Slim~Alouini,~\IEEEmembership{Fellow,~IEEE}
\thanks{Qiyu Ma is with the Department of Electronic Engineering, Tsinghua University, Beijing, 100084, China (Email: ma-qy22@mails.tsinghua.edu.cn), Jiajie Xu and Mohamed-Slim Alouini are  with the Communication Theory Lab (CTL), Computer, Electrical, and Mathematical Science and Engineering (CEMSE) division in King Abdullah University of Science and Technology (KAUST), Thuwal 23955, Makkah Province, Saudi Arabia. (Emails: jiajie.xu.1@kaust.edu.sa; slim.alouini@kaust.edu.sa)\\

% Copyright (c) 2026 IEEE. Personal use of this material is permitted. However, permission to use this material for any other purposes must be obtained from the IEEE by sending a request to pubs-permissions@ieee.org.

}
}

% The paper headers
\markboth{ }%
{}
\IEEEpubid{}
% Remember, if you use this you must call \IEEEpubidadjcol in the second
% column for its text to clear the IEEEpubid mark.

\maketitle

\begin{abstract}
Underwater Wireless Optical Communication (UOWC) is a key enabling technology for future space-air-ground-sea integrated networks. However, UOWC faces critical hurdles from spatial randomness and stringent energy constraints. These challenges fundamentally limit network lifetime and sustainability. This paper develops a comprehensive stochastic geometry framework to perform a differential energy analysis of UOWC links.Instead of relying on simplified models, we employ a three-dimensional truncated Poisson point process (TPPP) to accurately capture the anisotropic nature of the underwater environment, specifically the disparity between horizontal spread and vertical depth. It incorporates a Lambertian emission pattern, random receiver positions and orientations, and a realistic channel model with extinction effects. Under this model, we derive a full suite of closed-form expressions for key performance indicators. These include the nearest-neighbor distance distribution, expected received power, signal-to-noise ratio (SNR), and bit error rate (BER). A principal and counter-intuitive finding of our analysis is an ``offset-pointing'' strategy. This strategy involves intentionally misaligning the receiver by a deterministically optimal angle. This approach maximizes the integrated received power across the aperture, contrary to the conventional pursuit of perfect alignment. We formulate and solve an energy-efficiency optimization problem. Our results demonstrate that this strategy enhances system robustness and yields substantial performance gains. Simulation results validate our analytical models. They show that the optimal offset strategy can reduce the required transmit power by nearly 20\% to achieve a target BER. This reduction directly translates into extended network lifetime and higher total data throughput. These findings offer a new design paradigm for deploying robust, cost-effective, and sustainable UOWC networks.
\end{abstract}

\begin{IEEEkeywords}
UOWC, Energy optimization, Stochastic geometry
\end{IEEEkeywords}

\section{Introduction}
    \subsection{Background}
        The convergence of sixth-generation (6G) networks, which promises a future of the Internet of Everything \cite{dang2020should}, and the strategic vision for integrated space-air-ground-sea connectivity \cite{xu2023space}, has intensified the demand for efficient and reliable underwater wireless communication (UWC) technologies. Among the primary UWC modalities, radio frequency (RF) \cite{1353478}, acoustic \cite{4752682,6605638,han2018joint}, and optical \cite{saeed2019underwater,kaushal2016underwater,zeng2016survey}, each presents a distinct trade-off. RF signals suffer from prohibitive attenuation in seawater, severely limiting their range. While underwater acoustic communication enables long-distance transmission and is widely adopted, its inherent low bandwidth and high latency impede its ability to support high-throughput applications like video streaming. Consequently, underwater optical wireless communication (UOWC) emerges as a compelling alternative. By exploiting the low-attenuation blue-green optical window, UOWC can achieve transmission distances of up to 100 meters \cite{wang2019100,chen2020toward} while delivering data rates at the Gbps level \cite{hong2019discrete,lu201960m,liu201734}. These capabilities position UOWC as a critical enabling technology for future applications, including the underwater Internet of Things, collaborative drone swarms, and high-speed access for seabed observation networks.
        
        However, the distinctive characteristics of the underwater environment pose significant challenges to optical signal transmission. Severe path loss fundamentally restricts optical propagation, imposing a critical trade-off between transmission range and coverage in UOWC link design \cite{christopoulou2020optimal}. System performance is further degraded by a host of factors, including pointing errors between the transmitter (Tx) and receiver (Rx) \cite{9791364, ijeh2021parameter}, air bubbles from breaking waves \cite{oubei2017performance}, and oceanic turbulence \cite{zedini2019unified, lou2022performance}. Critically, the random displacement of terminal nodes, driven by ocean currents and wind friction, renders UOWC systems inherently stochastic and challenges conventional Pointing, Acquisition, and Tracking (PAT) mechanisms. While stochastic geometry (SG) offers a powerful analytical framework for such random distributions \cite{chiu2013stochastic, elsawy2016modeling}, its results often provide an averaged, macroscopic view. This paper pioneers a differential analysis framework to capture more nuanced performance implications. By differentiating the optical link along two fundamental degrees of freedom, the receiver's position and orientation, we can precisely quantify how infinitesimal changes in these variables impact key performance indicators (KPIs) such as the nearest neighbor distance distribution, received power, signal to noise ratio (SNR), and bit error rate (BER).
    \vspace{-4mm}
\subsection{Related Work}
    Modeling the stochastic uncertainties inherent in the dynamic underwater environment is critical for UOWC system analysis. Research has progressed along two main fronts: macroscopic network modeling and microscopic link-level analysis. On the macroscopic scale, stochastic geometry has been employed to analyze the impact of random node locations. Foundational studies provided insights into performance under receiver position uncertainty, first in 2D planes \cite{christopoulou2020performance} and later in 3D conical volumes \cite{vaiopoulos2023uowc}. Concurrently, microscopic analyses have focused on the physical layer, providing detailed models for the impact of pointing errors (i.e., orientation uncertainty), often in conjunction with other channel impairments like oceanic turbulence \cite{boluda2020impact, rahman2023performance, han2022study}.
    
    Despite these advances, two significant gaps persist in the literature. Firstly, while existing models analyze the effects of random position or orientation, they typically offer a probabilistic or averaged assessment of performance. A granular, differential framework capable of capturing the continuous, fine-grained impact of these two fundamental degrees of freedom on key performance indicators remains undeveloped. Secondly, and perhaps more critically, there is a notable decoupling of these physical uncertainty models from the analysis of node energy dynamics. While the sophisticated modeling of energy efficiency and differentiated power consumption is a mature research area in terrestrial wireless networks \cite{alamu2020energy, portillo2024energy}, a comprehensive statistical analysis that integrates energy constraints is largely absent in UOWC performance literature. This omission overlooks the paramount importance of network lifetime and sustainability for energy-constrained underwater applications.

%%%%%%%%%%%%%%%%%%%%%%%%%%%%%%%%%%%%%%%%%%%%%%%%%%%%%%%%%%%%%%%%%
\begin{table*}
\centering
\caption{Key Symbols and Definitions}\label{tab:revised_parameters}
\begin{tabularx}{\textwidth}{@{} c X c X @{}}
\toprule
\textbf{Symbol} & \textbf{Definition} & \textbf{Symbol} & \textbf{Definition} \\
\midrule
\multicolumn{4}{l}{\textit{\textbf{Stochastic Geometry \& System Model}}} \\
$\Phi_{3D}$ & 3D non-uniform point process representing node locations & $\Lambda$ & Intensity of the 2D PPP [\unit{nodes/m^2}] \\
$\mathcal{N}$ & Intensity of the 3D PPP in the slab, $\mathcal{N} = \Lambda/R$ [\unit{nodes/m^3}] & $R$ & Maximum vertical depth of the underwater environment [\unit{m}] \\
$L_{dis}$ & Euclidean distance between the Tx and Rx [\unit{m}] & $L_{deep}$ & Vertical depth of the receiver [\unit{m}] \\
$\theta$ & Angle of irradiance from the Tx optical axis & $\psi$ & Angle of incidence at the Rx optical axis \\
$\delta$ & Pointing error offset angle [\unit{\degree}] & $\delta_{opt}$ & Optimal pointing error offset angle [\unit{\degree}] \\
\midrule
\multicolumn{4}{l}{\textit{\textbf{Transmitter \& Channel Parameters}}} \\
$P_{Tx}$ & Transmitted optical power [\unit{W}] & $P_{Rx}$ & Received optical power [\unit{W}] \\
$\phi_{1/2}$ & Transmitter's half-power semi-angle [\unit{\degree}] & $m$ & Lambertian emission order, $m = -\ln(2)/\ln(\cos(\phi_{1/2}))$ \\
$\mathcal{H}$ & Line-of-sight (LOS) channel DC gain & $\mathfrak{S}$ & Radiant intensity distribution of the transmitter \\
$\mathcal{L}$ & Path loss due to medium extinction & $\mathfrak{R}$ & Receiver's response strength \\
$c(\lambda, z)$ & Wavelength and depth-dependent extinction coefficient [\unit{m^{-1}}] & $n$ & Refractive index of the receiver's optical concentrator \\
\midrule
\multicolumn{4}{l}{\textit{\textbf{Receiver \& Noise Parameters}}} \\
$A_r$ & Area of the receiver's aperture lens, $A_r = \pi D^2/4$ [\unit{m^2}] & $D$ & Diameter of the receiver's aperture lens [\unit{cm}] \\
$\phi_{FoV}$ & Receiver's field-of-view semi-angle [\unit{\degree}] & $T_s(\psi)$ & Transmittance of the optical filter \\
$\eta$ & Photon Detection Efficiency (PDE) of the SiPM & $G$ & Gain of the SiPM \\
$I_{SiPM}$ & Photocurrent generated by the SiPM [\unit{A}] & $I_d$ & Dark current of the SiPM [\unit{nA}] \\
$P_{ct}$ & Crosstalk probability of the SiPM & $F$ & Excess noise factor of the SiPM \\
$R_L$ & Load resistance [\unit{\Omega}] & $\Delta\lambda$ & Bandwidth of the optical bandpass filter [\unit{nm}] \\
$P_{sun}$ & Received solar background radiation power [\unit{W}] & $E_{sun}(z=0)$ & Solar irradiance at sea level [\unit{W/m^2}] \\
$\epsilon$ & Attenuation coefficient for solar radiation [\unit{m^{-1}}] & $\zeta_r$ & Reflectance factor for solar radiation \\
$\sigma_q^2$ & Quantum (shot) noise variance & $\sigma_d^2$ & Dark current noise variance \\
$\sigma_{th}^2$ & Thermal noise variance & $\sigma_{solar}^2$ & Solar background noise variance \\
\midrule
\multicolumn{4}{l}{\textit{\textbf{Performance \& Optimization Metrics}}} \\
$\text{SNR}$ & Signal-to-Noise Ratio & $\text{BER}$ & Bit Error Rate \\
$R_b$ & Achievable bit rate [\unit{bps}] & $B$ & System bandwidth [\unit{Hz}] \\
$E_{total}$ & Total energy budget for the network [\unit{J}] & $N_b$ & Total number of transmitted bits \\
$T_{tx}$ & Transmission time per node [\unit{s}] & $\mathcal{A}$ & Area of the network deployment region [\unit{m^2}] \\
\bottomrule
\vspace{-6mm}
\end{tabularx}
\end{table*}

%%%%%%%%%%%%%%%%%%%%%%%%%%%%%%%%%%%%%%%%%%%%%%%%%%%%%%%%%%%%%%%%%
\vspace{-3mm}
\subsection{Contribution}
            To bridge these critical gaps, this study develops an SG framework for the differential energy analysis of individual nodes in UOWC, explicitly accounting for randomized node positions and orientations. We investigate the performance impact of such randomness and aim to determine optimal node deployment and orientation strategies, thereby enabling a fundamental understanding of network lifetime and sustainability under spatial uncertainty. Our key contributions are threefold:
            \subsubsection{Stochastic Underwater Channel Modeling}
            We develop a specialized stochastic geometry framework for UOWC networks. By modeling the node distribution within an infinite slab using a three-dimensional truncated Poisson point process (TPPP), we capture the inherent geometric anisotropy of the underwater environment. While TPPP is an established mathematical tool, its specific application here provides a rigorous basis for the subsequent differential energy analysis, explicitly incorporating the receiver's random spatial position and three-dimensional orientation. Based on this, we establish a differential energy analysis framework to evaluate a single communication link, considering the transmitter's Lambertian emission and the receiver's stochastic properties.
            
            \subsubsection{Analytical Performance Expressions}
            We derive a comprehensive set of closed-form expressions for KPIs. These include the probability density function (PDF) and cumulative distribution function (CDF) for nearest-neighbor distance, the expected received optical power under random orientation, the SNR, and the BER. These analytical solutions enable performance evaluation without relying on extensive simulations.
            
            \subsubsection{Network Design and Optimization Insights}
            We formulate and solve an optimization problem to maximize energy efficiency (Bits per Joule), revealing a key trade-off between node density and transmission power. A principal finding is the significant performance gain achieved by an \textit{offset-pointing} strategy, where intentionally misaligning the receiver by an optimal angle increases received power and reduces BER. This offers practical guidance for deploying robust and energy-efficient UOWC networks.
            
\vspace{-4mm}           
\subsection{Organization}
    The remainder of this paper is structured as follows: Section \ref{sec:SM} details our analysis system model. In section \ref{sec:PE}, the key distribution and performance metrics are derived. Section \ref{sec:SR} analyzes experimental configurations and results. Conclusions are discussed in Section \ref{Sec:Conc}. The notation used throughout this paper is summarized in Table \ref{tab:revised_parameters}.

\section{System Model}\label{sec:SM}

This section details the comprehensive model for an UOWC system. We first describe the geometric link configuration and the stochastic model for node distribution. Subsequently, we formulate the channel model, encompassing the optical extinction coefficient and the end-to-end channel gain.

%%%%%%%%%%%%%%%%%%%%%%%%%%%%%%%%%%%%%%%%%%%%%%%%%%%%%%%%%%%%%%%%%
    \subsection{Stochastic Geometry Model}
    \label{2.a}
        To accurately model the three-dimensional and anisotropic characteristics of the node distribution, we consider a network deployed within an infinite slab defined as $S = \mathbb{R}^2 \times [0, R]$. The spatial arrangement of the nodes is described by a 3D point process, $\Phi_{3D} = \{ (x_i, y_i, z_i) \}$, which is constructed as follows (illustrated in Fig.~\ref{fig:SystemModel4Net}). The projection of the node locations onto the horizontal plane, denoted by $\Phi_{2D} = \{ (x_i, y_i) \}$, constitutes a homogeneous Poisson Point Process (PPP) on $\mathbb{R}^2$ with a constant spatial intensity of $\Lambda > 0$. Furthermore, each node is assigned a height, $z_i$, which is an independent and identically distributed (i.i.d.) random variable drawn from a uniform distribution, $\mathcal{U}(0, R)$.
        
        The intensity of this 3D process, $\mathcal{N}$, can be derived from these properties. For an infinitesimal volume element $\mathrm{d}V = \mathrm{d}A \times \mathrm{d}z$ located within the slab $S$, the expected number of nodes from the 2D projected process within the area $\mathrm{d}A$ is $\mathbb{E}[N_{2D}(\mathrm{d}A)] = \Lambda \, \mathrm{d}A$ \cite{streit2010poisson}. As the height of each node is uniformly and independently distributed, the probability of a node falling within the infinitesimal height interval $dz$ is given by $P(z_i \in \mathrm{d}z) = \mathrm{d}z/R$. Leveraging the independence of the planar process and the height distribution, the expected number of 3D nodes within the volume $\mathrm{d}V$ is:
        \begin{align}
            \mathbb{E}[N_{3D}(\mathrm{d}V)] &= \mathbb{E}[N_{2D}(\mathrm{d}A)] \cdot P(z_i \in \mathrm{d}z) \nonumber \\ 
            &= (\Lambda \, \mathrm{d}A) ( \frac{\mathrm{d}z}{R} ) = \frac{\Lambda}{R} \, \mathrm{d}V .
        \end{align}
        This confirms that $\Phi_{3D}$ is a homogeneous PPP within the slab $S$, characterized by a constant 3D intensity of $\mathcal{N} = \Lambda/R$.
        
        Without loss of generality, we model a communication link between any two nodes, as depicted in Fig.~\ref{fig:SystemModel4Pair}. The transmitter (Tx) is positioned at the origin of the coordinate system, $O(0,0,0)$. The transmitter, which may consist of a single Light Emitting Diode (LED) or an array thereof, is oriented to direct its emission towards the seafloor along the positive z-axis\footnote{If it is not pointed directly at the seafloor, the analysis can be continued by adding a bias. Furthermore, in scenarios where the transmitter experiences random orientation changes due to hydrodynamics, the coordinate system can be mathematically realigned to the instantaneous optical axis. Due to the isotropy of the homogeneous Poisson point process, the statistical properties of the receiver distribution remain invariant under this rotation. Consequently, fixing the transmitter orientation serves as a valid canonical model that captures the relative geometric randomness without loss of generality.}. The transmitter's emission pattern is modeled as Lambertian, which creates a conical coverage volume defined by a characteristic semi-angle at half power, $\phi_{1/2}$.
        
        An optical receiver (Rx) is randomly located at point $P$, which is described by Cartesian coordinates $(x, y, z)$ or, equivalently, cylindrical coordinates $(r, \phi, z)$. The angle of irradiance at the Tx is denoted by $\theta$, while the angle of incidence at the Rx is $\psi$. The direct Euclidean distance between the Tx and the Rx is given by $L_{dis} = \sqrt{x^2 + y^2 + z^2}$. The depth difference between the Tx and Rx is denoted as $Z_0$.

    \subsection{Channel Model}
        %%%%%%%%%%%%%%%%%%%%%%%%%%%%%%%%%%%%%%%%%%%%%%%%%%%%%%%%%%%%%%%%%
    \begin{figure}[!t]
    \centering
    \includegraphics[width=0.8\columnwidth]{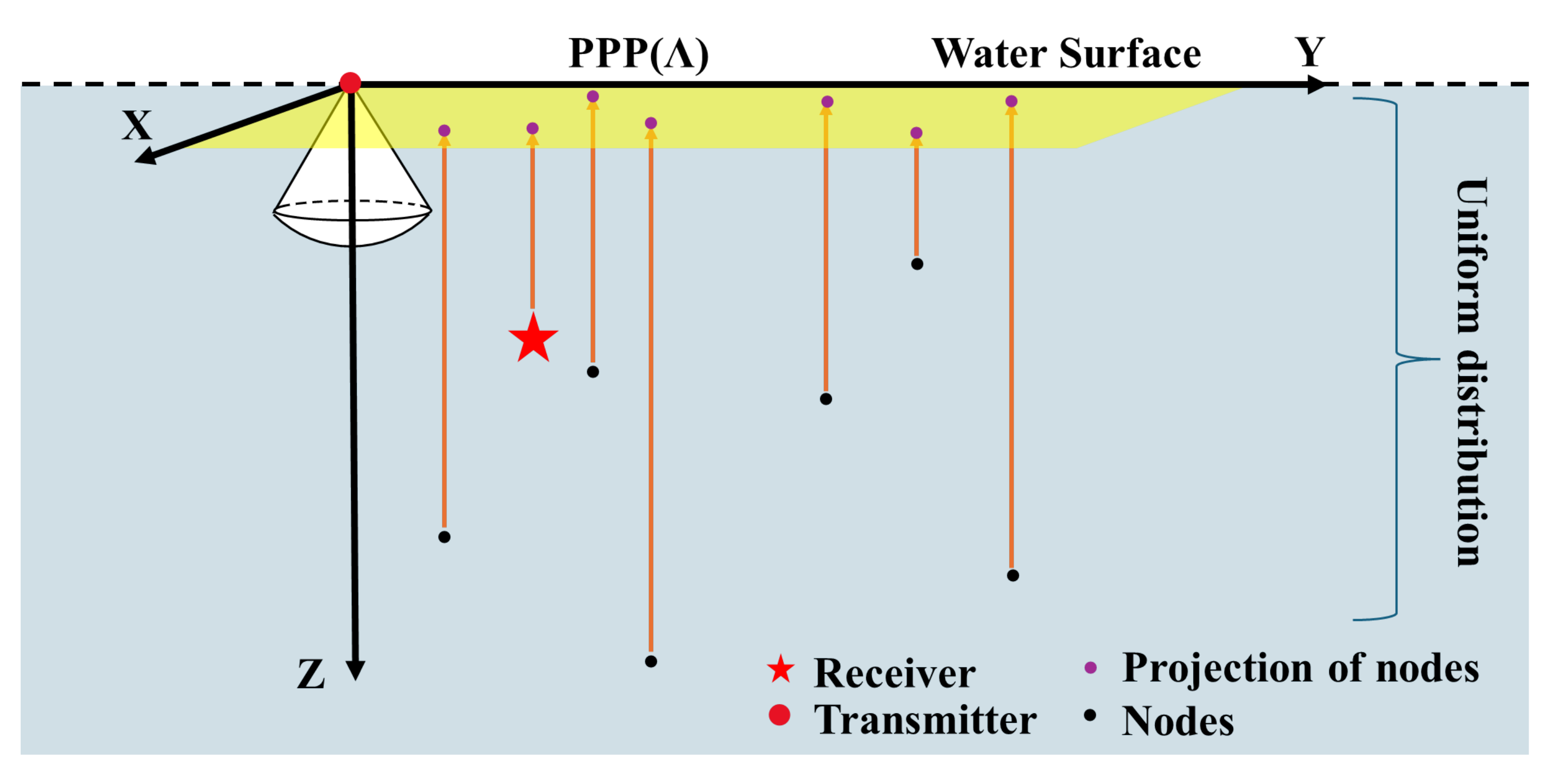}
    \caption{Model of actual underwater optical communication}
    \label{fig:SystemModel4Net}
\end{figure}
    %%%%%%%%%%%%%%%%%%%%%%%%%%%%%%%%%%%%%%%%%%%%%%%%%%%%%%%%%%%%%%%%%    %%%%%%%%%%%%%%%%%%%%%%%%%%%%%%%%%%%%%%%%%%%%%%%%%%%%%%%%%%%%%%%%%
\begin{figure}[!t]
    \centering
    \includegraphics[width=0.8\columnwidth]{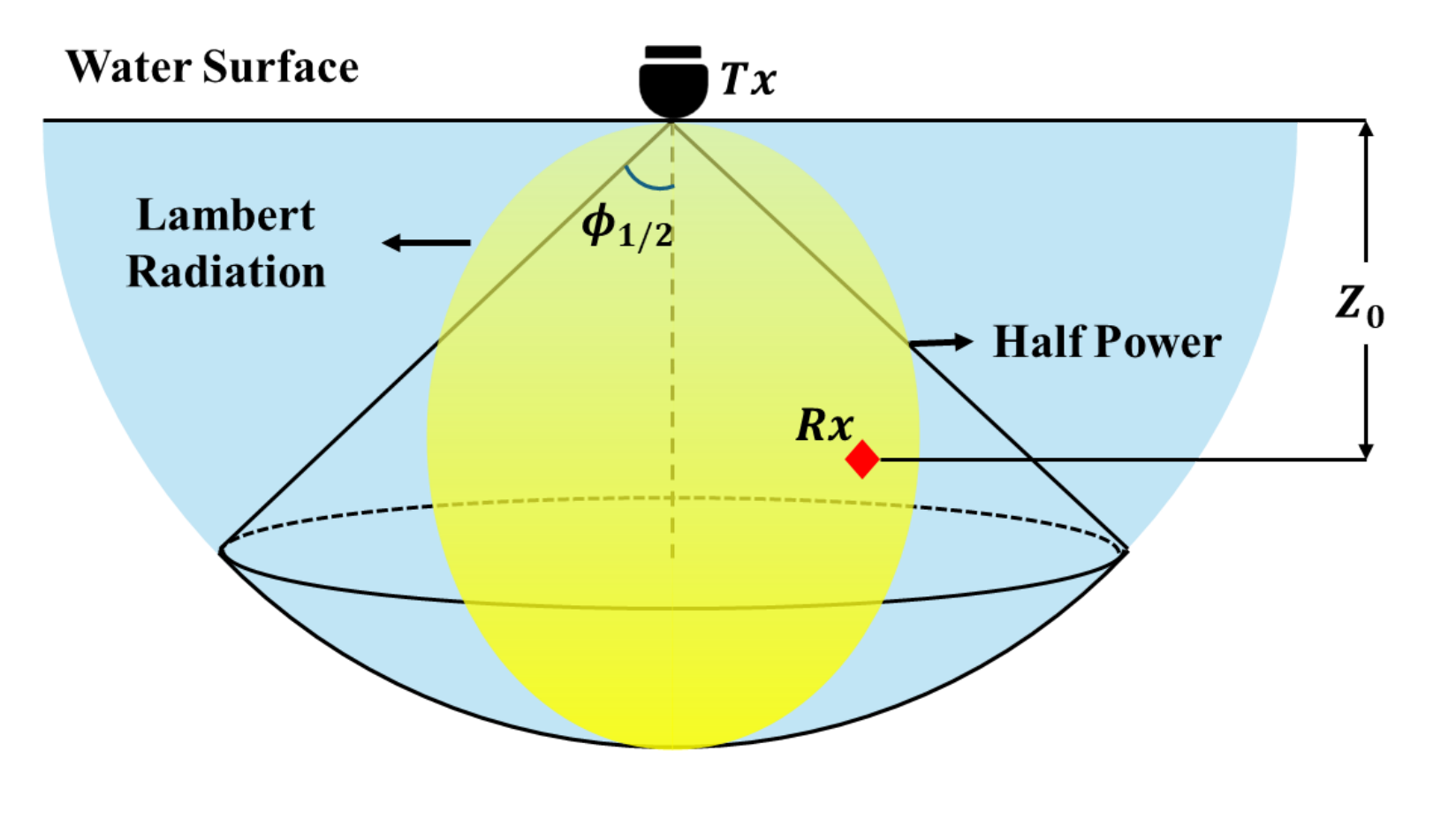}
    \caption{Mathematical model of a pair of Tx and Rx in underwater optical communication}
    \label{fig:SystemModel4Pair}
\end{figure}

    %%%%%%%%%%%%%%%%%%%%%%%%%%%%%%%%%%%%%%%%%%%%%%%%%%%%%%%%%%%%%%%%%
        \subsubsection{Extinction Coefficient}
            The propagation of an optical signal through seawater is primarily attenuated by absorption and scattering. These phenomena arise from the complex composition of the marine medium, which includes dissolved salts, suspended particulate matter, and chlorophyll from phytoplankton. The magnitudes of both absorption and scattering are functions of the signal's wavelength, $\lambda$, and the water depth, $z$.
            
            The cumulative impact of these processes is quantified by the total extinction coefficient, $c(\lambda, z)$, which is defined as the sum of the absorption coefficient, $a(\lambda, z)$, and the scattering coefficient, $b(\lambda, z)$ \cite{haltrin1999chlorophyll}:
            \begin{align}
            c(\lambda, z) = a(\lambda, z) + b(\lambda, z).
            \end{align}
            As established in the literature, the coefficients $a(\lambda, z)$ and $b(\lambda, z)$ can be decomposed into constituent components attributable to pure water, chlorophyll, and suspended sediments. This decomposition enables a precise estimation of the total extinction coefficient for specific environmental conditions.

            It is worth mentioning that the extinction coefficient $c(\lambda, z)$ exhibits complex dependencies on depth, salinity, and temperature gradients. Recognizing that the underwater optical channel serves as the fundamental physical foundation for signal attenuation and link reliability, its precise characterization is paramount to any rigorous performance analysis. However, to maintain the mathematical tractability of the stochastic geometry derivation, this study adopts a homogeneous medium assumption, using an effective average value for $c(\lambda)$ \cite{vaiopoulos2023uowc}. This model is primarily applicable to well-mixed deep-sea layers or open ocean environments. While it does not capture sharp hydrological transitions (e.g., haloclines or river plumes), it suffices to reveal the fundamental geometric laws governing the offset pointing strategy.
        \subsubsection{Line-of-Sight (LOS) Channel Gain}
            For a LOS link, the channel DC gain, $\mathcal{H}$, encapsulates the combined effects of the transmitter's emission pattern, the receiver's collection efficiency, and the attenuation along the propagation path. Accordingly, it is formulated as the product of three components: the transmitter's radiant intensity distribution ($\mathfrak{S}$), the path loss from medium extinction ($\mathcal{L}$), and the receiver's response ($\mathfrak{R}$).
            The transmitted power adheres to a generalized Lambertian radiation pattern. The radiant intensity, $\mathfrak{S}$, which quantifies the power per unit solid angle in a specific direction, is scaled by the squared distance to yield the irradiance at the receiver \cite{kahn1997wireless}:
            \begin{align}
            \mathfrak{S} = \frac{(m + 1)\cos^m(\theta)}{2\pi L_{dis}^2},
            \end{align}
            where $\theta$ is the angle of irradiance relative to the transmitter's primary axis. The directivity of the transmitter's beam is described by the Lambertian order ($m$), which can be calculated from the half-power semi-angle ($\phi_{1/2}$), using the following relationship:
            \begin{align}
            m = -\frac{\ln 2}{\ln(\cos(\phi_{1/2}))}.
            \end{align}

            The path loss, $\mathcal{L}$, is governed by the Beer-Lambert law, which models the exponential decay of optical power over the distance $L_{dis}$ due to absorption and scattering:
            \begin{align}
            \mathcal{L} = e^{-c(\lambda,z)L_{dis}}.
            \end{align}
            
            The receiver's response, $\mathfrak{R}$, is determined by its physical and optical properties. It incorporates the transmission of the optical bandpass filter ($T_s(\psi)$), the gain of the optical concentrator ($g(\psi)$), and the projection effect of the incident angle ($\cos(\psi)$) \cite{hamza2016investigation}. We can have
            \begin{align}
                    \mathfrak{R} = T_s(\psi) \cdot g(\psi) \cdot \cos(\psi).
            \end{align}
            The optical concentrator serves to enhance the signal collection capability. For an ideal non-imaging concentrator, the gain is given by:
            \begin{align}
            g(\psi) = 
            \begin{cases}
                \frac{n^2}{\sin^2(\phi_{FoV})}, & 0 \le \psi \le \phi_{FoV}  \\
                0, & \psi > \phi_{FoV}
            \end{cases},
            \end{align}
            where $n$ is the refractive index of the receiver's optical interface and $\phi_{FoV}$ is the receiver's field of view semi-angle. 
            
            Combining these components yields the total LOS channel gain:
            \begin{align}
            \mathcal{H} &= \mathfrak{S} \cdot \mathcal{L} \cdot \mathfrak{R} \nonumber\\ 
            &= \frac{(m+1)}{2\pi L_{dis}^2} \cos^m(\theta) \cos(\psi) T_s(\psi) g(\psi) e^{-c(\lambda,z)L_{dis}}.
            \label{eq:FullChannelGain}
            \end{align}
            The received power, $P_{Rx}$, is subsequently calculated as the product of the transmitted power, $P_{Tx}$, and this channel gain: $P_{Rx} = P_{Tx} \cdot \mathcal{H}$.

            As one of the most important KPIs for evaluating link quality, SNR can be expressed as:
            \begin{align}
                \text{SNR} = \frac{[ I_{signal}(P_{Tx}, L_{dis}, \delta, \phi_{1/2}) ]^2}{\sigma_{total}^2},
            \label{eq:snr_functional}
            \end{align}
where $\sigma_{total}$ is the sum of all the variances of the noise.

\section{Performance Evaluation}\label{sec:PE}

\subsection{Nearest Neighbor Distribution}
Based on existing research \cite{talgat2020nearest,yuan2020connectivity}, we can assume without loss of generality that a transmitter (Tx) forms a communication link with its nearest neighbor. We analyze the distance distribution from the origin to this nearest node. For a uniform PPP with intensity $\lambda$ in the region $\mathcal{X}$, the survival function is given by the following formula.

\begin{proposition}[Survival Function and PDF of Nearest Neighbor Distance]
    For a PPP of intensity $\lambda = \Lambda/R$ in a slab geometry $\mathcal{X} = \mathbb{R}^2 \times [0, R]$, the survival function $S(s) = \mathbb{P}(L_{dis} > s)$ and the corresponding CDF $F_{L_{dis}}(s)$ of the nearest neighbor distance $L_{dis}$ from the origin are given by:
    \begin{align}
        S(s) &= 
        \begin{cases}
            \exp(-\dfrac{2\pi\Lambda}{3R}s^3), & \!\!\!\!\!\!\!\!\!\!\!0 \le s \le R \\
            \exp(-\pi\Lambda(s^2 - \dfrac{R^2}{3})), & s > R
        \end{cases} ,
\end{align}
    \begin{align}
        F_{L_{dis}}(s) &=\!
        \begin{cases}
            1 - \exp(-\dfrac{2\pi\Lambda}{3R}s^3), & 0 \le s \le R \\
            1 - \exp(-\pi\Lambda(s^2 - \dfrac{R^2}{3})), & s > R
        \end{cases}.
\end{align}
    The PDF of $L_{dis}$ is:
    \begin{align}
       &\frac{\mathrm{d}F_{L_{dis}}(s)}{ds} =  f_{L_{dis}}(s) \nonumber
       \\  
        = &\begin{cases}
            \dfrac{2\pi\Lambda s^2}{R} \exp(-\dfrac{2\pi\Lambda s^3}{3R}), & 0 \le s \le R \\  
            2\pi\Lambda s \exp(-\pi\Lambda(s^2 - \dfrac{R^2}{3})), & s > R
        \label{eq:15}
        \end{cases}.
    \end{align}
\end{proposition}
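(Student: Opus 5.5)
The plan is to use the void probability of a Poisson point process. Since $\Phi_{3D}$ is a homogeneous PPP on $\mathcal{X}=\mathbb{R}^2\times[0,R]$ with intensity $\lambda=\Lambda/R$ (as established in Sec.~\ref{2.a}), the event $\{L_{dis}>s\}$ coincides with the event that the ball $B(O,s)$ centred at the transmitter contains no point of $\Phi_{3D}$. By Slivnyak's theorem, conditioning on a node at $O$ leaves the remaining nodes a PPP of the same intensity, so the Tx itself is not counted. Hence
\begin{align}
S(s)=\mathbb{P}\bigl(\Phi_{3D}(B(O,s)\cap\mathcal{X})=0\bigr)=\exp\bigl(-\lambda\,|B(O,s)\cap\mathcal{X}|\bigr),
\end{align}
and the whole problem reduces to computing the volume $V(s)=|B(O,s)\cap\mathcal{X}|$.

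Because $O$ lies on the boundary plane $z=0$ and the Tx points into the slab, I compute $V(s)$ by slicing horizontally. At height $z\in[0,\min(s,R)]$ the cross-section is a disc of area $\pi(s^2-z^2)$, so
\begin{align}
V(s)=\int_0^{\min(s,R)}\pi(s^2-z^2)\,\mathrm{d}z .
\end{align}
For $0\le s\le R$ the ball does not reach the top face $z=R$. The integral then gives the hemisphere volume $2\pi s^3/3$, and multiplying by $\Lambda/R$ yields the exponent $2\pi\Lambda s^3/(3R)$. For $s>R$ the ball is truncated at $z=R$, and the integral gives $\pi(s^2R-R^3/3)$; multiplying by $\Lambda/R$ yields $\pi\Lambda(s^2-R^2/3)$. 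This produces both branches of $S(s)$, and $F_{L_{dis}}=1-S$ follows immediately.

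For the PDF I differentiate each branch of $F_{L_{dis}}$. Since $\frac{\mathrm{d}}{\mathrm{d}s}\bigl(2\pi\Lambda s^3/(3R)\bigr)=2\pi\Lambda s^2/R$ and $\frac{\mathrm{d}}{\mathrm{d}s}\bigl(\pi\Lambda(s^2-R^2/3)\bigr)=2\pi\Lambda s$, we recover \eqref{eq:15}. I will also check that both exponents equal $2\pi\Lambda R^2/3$ at $s=R$, so $S$ is continuous there. In fact $V'(s)$ is continuous at $s=R$, since both branches give $2\pi R^2$, so the PDF has no atom and is even continuous there.

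The main obstacle is the geometric step: correctly identifying that the Tx sits on the boundary face of the slab, so the relevant region is a half-ball (and a truncated half-ball for $s>R$) rather than a full ball. Justifying the use of the void probability for the Tx-centred nearest neighbour via Slivnyak is the only other subtle point. Everything else is routine calculus.
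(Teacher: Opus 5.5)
Your proposal is correct and follows essentially the same route as the paper: the void probability $S(s)=\exp(-\lambda\,|B(O,s)\cap\mathcal{X}|)$, the hemisphere volume $2\pi s^3/3$ for $s\le R$, the truncated volume $\pi R s^2-\pi R^3/3$ for $s>R$ obtained by integrating disc cross-sections over $[0,R]$, and then differentiating each branch. Your Slivnyak remark and the continuity check of $S$ and $f_{L_{dis}}$ at $s=R$ are small additions that the paper omits, but they do not change the argument.
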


Fig. \ref{Error2} demonstrates exact correspondence between the derived closed-form solution and Monte Carlo integration results, confirming analytical accuracy.
\begin{proof}
The survival function is defined as $S(s) = \exp(-\lambda \cdot \text{Vol}(B(\mathcal{O}, s) \cap \mathcal{X}))$ \cite{Baddeley2007}, where $B(\mathcal{O}, s)$ is a ball of radius $s$ centered at the origin $\mathcal{O}$. The effective intersection volume $V_{\text{eff}}(s)$ is calculated for two cases.

\textbf{Case 1: $0 \le s \le R$}. The intersection is a hemisphere within the slab, with volume:
\begin{align}
    V_{\text{eff}}(s) = \frac{2}{3}\pi s^3.
\end{align}

\textbf{Case 2: $s > R$}. The intersection is a spherical cap. The volume is found by integrating the circular cross-sections from $z=0$ to $z=R$:
\begin{align}
    V_{\text{eff}}(s) &= \int_0^R \pi(s^2 - z^2) \mathrm{d}z 
    = \pi R s^2 - \frac{\pi R^3}{3} .
\end{align}
Combining both cases gives the piecewise function for the effective volume:
\begin{align}
    V_{\text{eff}}(s) = 
    \begin{cases} 
        \dfrac{2}{3}\pi s^3, & 0 \le s \le R \\
        \pi R s^2 - \dfrac{\pi R^3}{3}, & s > R 
    \end{cases}.
\end{align}
Substituting $V_{\text{eff}}(s)$ and $\mathcal{N} = \Lambda/R$ into the survival function formula yields $S(s)$. The CDF is $F_{L_{dis}}(s) = 1 - S(s)$. The PDF is obtained by differentiating the CDF with respect to $s$ for each piece, which yields the stated result.
\end{proof}
    %%%%%%%%%%%%%%%%%%%%%%%%%%%%%%%%%%%%%%%%%%%%%%%%%%%%%%%%%%%%%%%%%

\begin{figure}[t]
    \centering
    \includegraphics[width=0.5\textwidth]{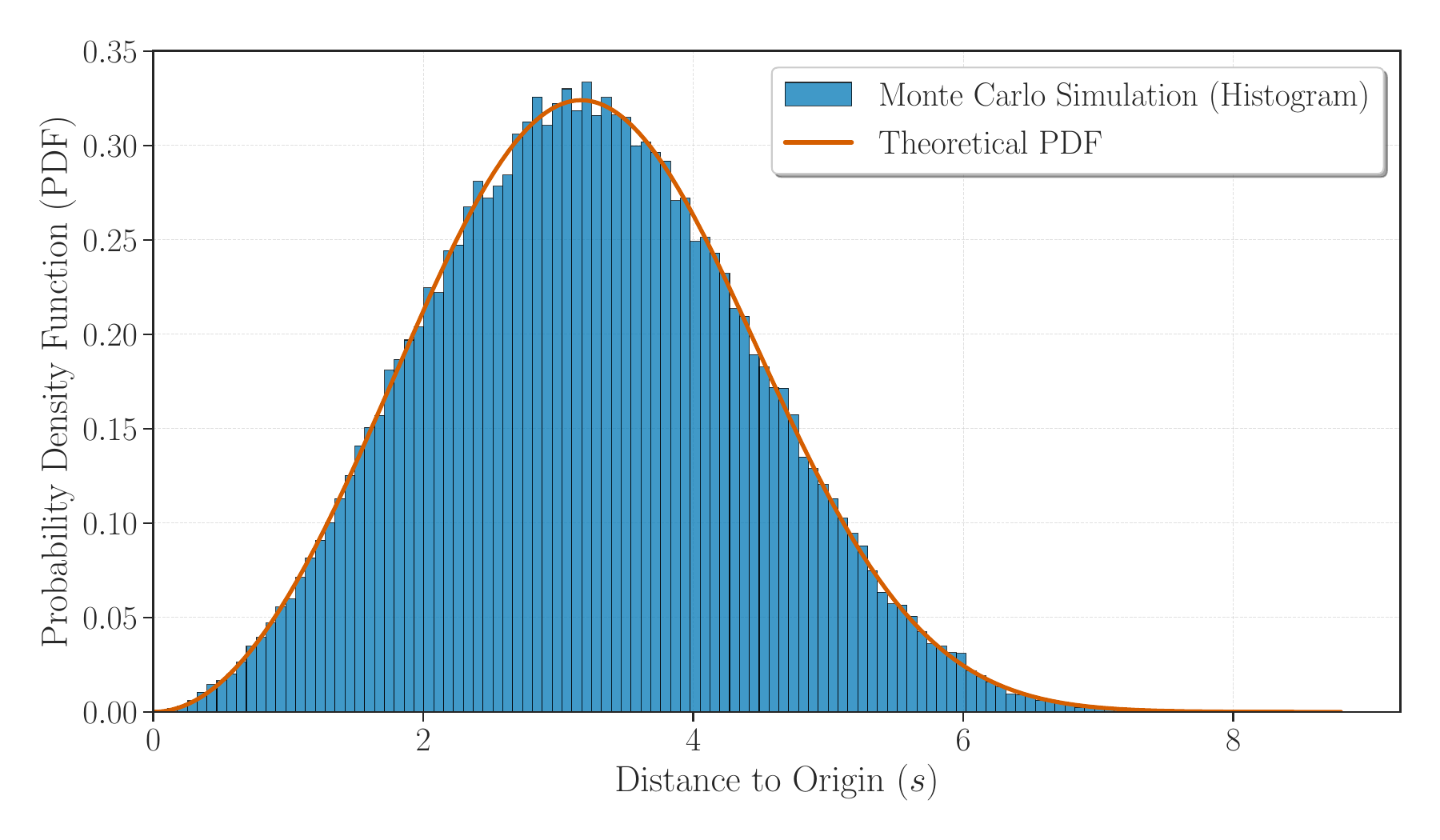}
    \caption{Comparison between closed-form solution and numerical solution for \eqref{eq:15}}
    \label{Error2}
\end{figure}
    %%%%%%%%%%%%%%%%%%%%%%%%%%%%%%%%%%%%%%%%%%%%%%%%%%%%%%%%%%%%%%%%%
\subsection{Expectation of Link Depth}
\begin{theorem}[Expected Vertical Position of Nearest Neighbor]\label{theorm32}
The expected vertical position $E[Z_0]$ of the nearest neighbor is given by:
\begin{align}
    E[Z_0] = \frac{1}{2} \int_{0}^{R} \exp(-\frac{2\pi\Lambda s^3}{3R}) \mathrm{d}s.
\end{align}
\end{theorem}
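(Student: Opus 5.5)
The plan is to condition on the nearest-neighbor distance $L_{dis}=s$ and use the distribution from the preceding Proposition. For a homogeneous PPP, given that the nearest point lies at distance $s$, its location is uniform on the sphere $\partial B(\mathcal{O},s)$ intersected with the slab $\mathcal{X}=\mathbb{R}^2\times[0,R]$. This follows from the isotropy of the Poisson process restricted to the slab: the conditional density on that surface is proportional to the surface measure.

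\textbf{Step 1 (conditional mean via Archimedes).} Archimedes' hat-box theorem says that the surface area of a sphere of radius $s$ between two horizontal planes $z=a$ and $z=b$ (with $0\le a<b\le s$) equals $2\pi s(b-a)$. Hence, conditional on $L_{dis}=s$, the height $Z_0$ of the uniformly placed point is uniform on its admissible range.
\begin{itemize}
\item For $0\le s\le R$, the admissible surface is the full upper hemisphere, so $Z_0\mid s\sim\mathcal{U}(0,s)$ and $\mathbb{E}[Z_0\mid s]=s/2$.
\item For $s>R$, the admissible surface is the spherical zone $0\le z\le R$, so $Z_0\mid s\sim\mathcal{U}(0,R)$ and $\mathbb{E}[Z_0\mid s]=R/2$.
\end{itemize}
Compactly, $\mathbb{E}[Z_0\mid L_{dis}=s]=\tfrac12\min(s,R)$.

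\textbf{Step 2 (total expectation).} Using the PDF $f_{L_{dis}}$ in \eqref{eq:15}, I would write
\begin{align}
\mathbb{E}[Z_0]=\int_0^R \frac{s}{2} f_{L_{dis}}(s)\,\mathrm{d}s+\frac{R}{2}\,\mathbb{P}(L_{dis}>R).
\end{align}
Integrating the first term by parts with $f_{L_{dis}}=-S'$ gives
\begin{align}
\int_0^R s f_{L_{dis}}(s)\,\mathrm{d}s=-R\,S(R)+\int_0^R S(s)\,\mathrm{d}s .
\end{align}
The boundary term $-\tfrac{R}{2}S(R)$ then cancels exactly with the second term $\tfrac{R}{2}S(R)$. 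What remains is $\tfrac12\int_0^R S(s)\,\mathrm{d}s$, and substituting $S(s)=\exp(-\tfrac{2\pi\Lambda}{3R}s^3)$ on $[0,R]$ yields the claim.

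An equivalent shortcut is to note that $\mathbb{E}[\min(L_{dis},R)]=\int_0^R S(s)\,\mathrm{d}s$ and then halve it.

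\textbf{Main obstacle.} The delicate step is Step 1. It requires justifying that, conditional on the nearest distance, the point is uniformly distributed on the truncated sphere. This needs a short argument: the Palm/void-probability factorization gives a joint density of the nearest point equal to $\lambda\,e^{-\lambda V_{\text{eff}}(s)}\,\mathrm{d}V$ on $\mathcal{X}$. This density depends only on $s$, so in spherical coordinates its angular part is uniform with respect to surface measure. Archimedes' theorem then converts that uniformity into uniformity in $z$. After this, the remaining calculus is routine.
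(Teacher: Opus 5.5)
Your proposal is correct and follows essentially the same route as the paper: you condition on $L_{dis}$, use $\mathbb{E}[Z_0\mid L_{dis}=s]=\tfrac12\min(s,R)$, integrate the $[0,R]$ piece by parts, and let the boundary term $-\tfrac{R}{2}S(R)$ cancel against $\tfrac{R}{2}S(R)$. The only difference is that you justify the conditional means, via the nearest-point density $\lambda e^{-\lambda V_{\text{eff}}(s)}$ and Archimedes' hat-box theorem, whereas the paper simply asserts them ``based on the geometry.''
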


\begin{proof}
We compute the expected vertical position using the law of total expectation:
\begin{align}
    E[Z_0] = \int_{0}^{\infty} E[Z_0 \mid L_{dis} = s] f_{L_{dis}}(s) \mathrm{d}s.
\end{align}
The conditional expectation is given by two regimes based on the geometry:
\begin{subequations}
\begin{align}
    E[Z_0 \mid L_{dis} = s] &= \frac{s}{2}, \quad 0 \le s \le R ,\\
    E[Z_0 \mid L_{dis} = s] &= \frac{R}{2}, \quad s > R \quad.
\end{align}
\end{subequations}
We split the integral at $s=R$:
\begin{align}
    E[Z_0] &= \underbrace{\int_{0}^{R} \frac{s}{2} f_{L_{dis}}(s) \mathrm{d}s}_{I_1} + \underbrace{\int_{R}^{\infty} \frac{R}{2} f_{L_{dis}}(s) \mathrm{d}s}_{I_2}.
\end{align}
First, we evaluate $I_2$, which is $\frac{R}{2}$ times the probability that $L_{dis} > R$:
\begin{align}
    I_2 = \frac{R}{2} \int_{R}^{\infty} f_{L_{dis}}(s) \mathrm{d}s = \frac{R}{2} S(R) = \frac{R}{2} e^{(-\frac{2\pi\Lambda R^2}{3})}.
\end{align}
Next, we evaluate $I_1$. A direct approach is to use integration by parts on the definition of $I_1$. We can relate $I_1$, leveraging the relationship between the PDF $f(s)$ and the survival functions $S(s)$. Let $u = s/2$ and $\mathrm{d}v = f(s)\mathrm{d}s$. Then $\mathrm{d}u = (1/2)\mathrm{d}s$ and $v = F(s) = 1-S(s)$. Applying integration by parts to the expression $\int_0^R (s/2)f(s)\mathrm{d}s$
\begin{align}
I_1 &= \int_{0}^{R} \frac{s}{2} f(s) \mathrm{d}s \nonumber\\
&= \frac{R}{2}F(R) - \frac{1}{2}\int_0^R (1-S(s))\mathrm{d}s \nonumber\\
&= \frac{R}{2}(1 - S(R)) - \frac{R}{2} + \frac{1}{2}\int_0^R S(s)\mathrm{d}s \nonumber\\
&= -\frac{R}{2}S(R) + \frac{1}{2}\int_0^R S(s)\mathrm{d}s.
\end{align}
Combining the results for $I_1$ and $I_2$:
\begin{align}
    E[Z_0] = ( -\frac{R}{2}S(R) + \frac{1}{2}\int_0^R S(s)\mathrm{d}s ) + \frac{R}{2}S(R).
\end{align}
This simplifies to the final expression:
\begin{align}
    E[Z_0] = \frac{1}{2} \int_{0}^{R} S(s) \mathrm{d}s = \frac{1}{2} \int_{0}^{R} \exp(-\frac{2\pi\Lambda s^3}{3R}) \mathrm{d}s.
\end{align}
This integral can be evaluated numerically or expressed using the incomplete gamma function.
\end{proof}

\subsection{Received Power}
\label{sec:Int}
    %%%%%%%%%%%%%%%%%%%%%%%%%%%%%%%%%%%%%%%%%%%%%%%%%%%%%%%%%%%%%%%%%
\begin{figure*}[!t]
    \centering
    \subfloat[Geometric models and relationships between Tx and Rx]{
        \includegraphics[width=0.5\textwidth]{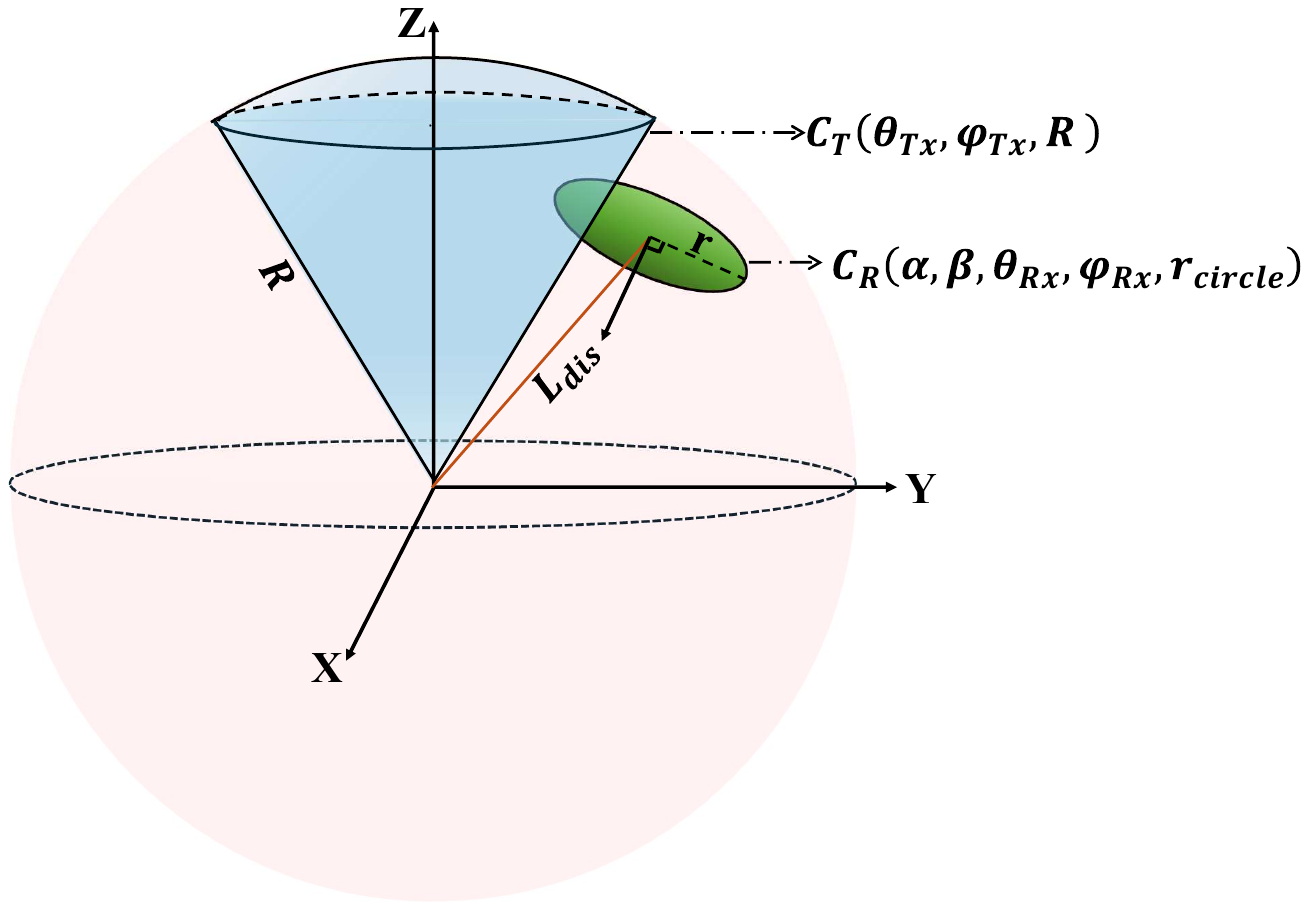}
        \label{Model For Tx and Rx}
    }
    \hfill
    \subfloat[Geometric model and parameter definition of Rx]{
        \includegraphics[width=0.4\textwidth]{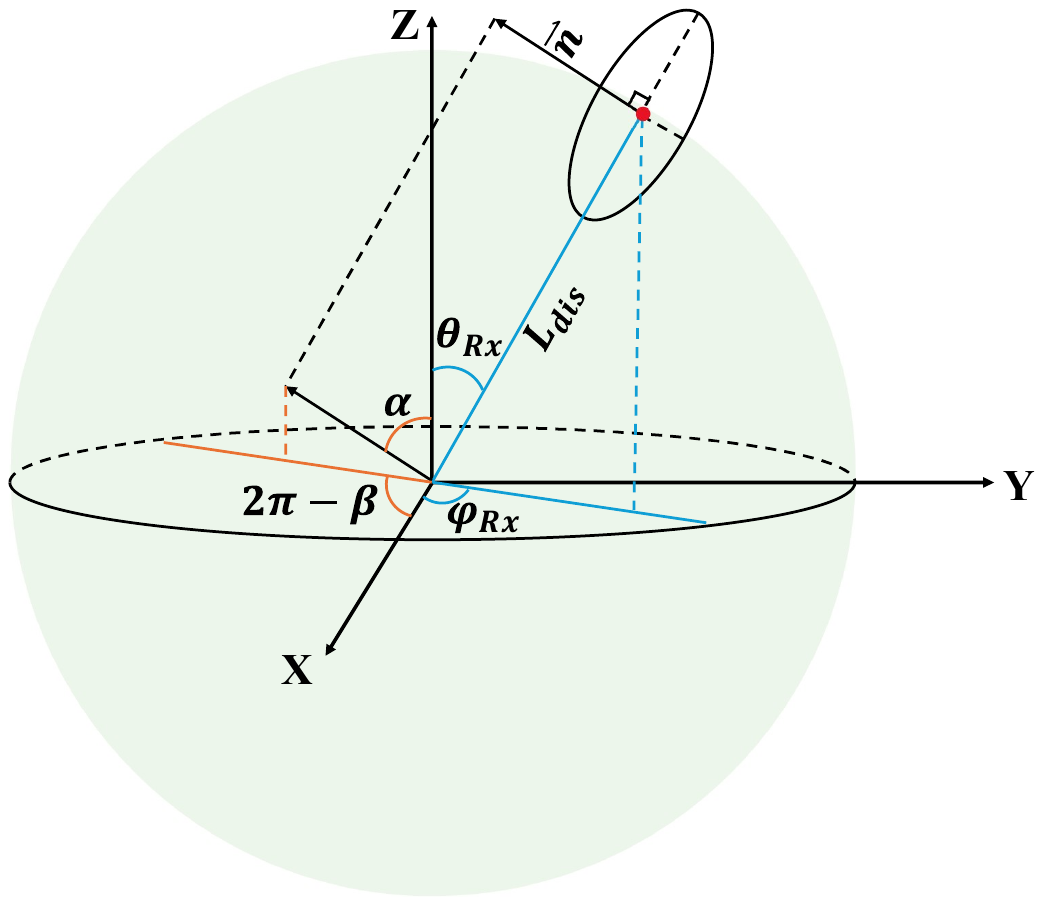}
        \label{Model for Rx}
    }
    \caption{Geometric models, mathematical definitions, and parameter representations related to Tx and Rx}
    \label{fig:angular_distributions}
        \vspace{-3mm}
\end{figure*}
    %%%%%%%%%%%%%%%%%%%%%%%%%%%%%%%%%%%%%%%%%%%%%%%%%%%%%%%%%%%%%%%%%
To model the optical wireless channel, we first establish a detailed geometric framework, as illustrated in Fig.~\subref*{Model For Tx and Rx}. In this configuration, the transmitter (Tx) is positioned at the origin of the coordinate system and is modeled as a radiating cone, $C_T$, characterized by the angles $\theta_{Tx}$ and $\phi_{Tx}$. The receiver (Rx) is modeled as a circular surface, $C_R$, with a radius $r_{circle}$. Its spatial placement and orientation are defined by the distance $L_{dis}$ separating its center from the origin, along with the orientation angles $\alpha$ and $\beta$. The specific geometric parameters relevant to the receiver plane, which are crucial for the power calculation, are further detailed in Fig.~\subref*{Model for Rx}.

The total received power, $P_{Rx}$, is subsequently found by integrating the incident optical power density over the entire receiver surface area. For this analysis, we assume the transmitter behaves as a Lambertian radiator of order $m$. This assumption governs the spatial distribution of the emitted radiation. Consequently, the differential received power, $\mathrm{d}P_{Rx}$, incident upon a differential element of the receiver's surface, is given by:
\begin{align}
\begin{aligned}
    \mathrm{d}P_{Rx} \!=\! \frac{1}{2} \mathfrak{R}P_{Tx}\frac{(m+1)}{2\pi}\! \frac{\cos^m(\theta)}{r^2} e^{-c(\lambda,L)r} |\cos\psi| \, \mathrm{d}A.
\end{aligned}
\end{align}

\begin{remark}
The $\frac{1}{2}$ term and the absolute value on $\cos\psi$ jointly account for the effect that light is received by only one side of the receiver.
\end{remark}
where $\mathfrak{R}$ is receiver responsivity, $c(\lambda,L)$ is the attenuation coefficient, $\gamma$ is the angle of incidence, and $dA = \rho \, d\rho \, d\theta'$ is the differential area element on the receiver disk. To find the average received power under random orientations, we integrate over all possible receiver positions $(\theta_{Rx}, \phi_{Rx})$ and normal vector orientations $(\alpha, \beta)$, assuming uniform distributions. This leads to the comprehensive integral:
\begin{align}
    P_{Rx} &= \frac{1}{2}\mathfrak{R}P_{Tx}\frac{m+1}{8 \pi^3} \int_0^{2\pi}\int_0^{\pi/2}\int_0^{2\pi}\int_{0}^{\pi}\int_0^{2\pi}\int_0^{r_{circle}} \nonumber \\
    &\quad \frac{\cos^m(\theta)}{r^2} e^{-c(\lambda,L)r} | \cos\gamma | \sin\theta_{Rx}\sin\alpha \, \rho \nonumber \\ 
    &\times \mathrm{d}\rho \, \mathrm{d}\theta' \, \mathrm{d}\alpha \, \mathrm{d}\beta \, \mathrm{d}\theta_{Rx} \, \mathrm{d}\phi_{Rx},
    \label{revi}
\end{align}
where $| \cos\gamma | = | \sin\theta_{Rx}\sin\alpha\cos(\phi_{Rx}-\beta)+\cos\theta_{Rx}\cos\alpha |$.

\begin{lemma}[Approximation for Large Distance]
For practical scenarios where the link distance is much larger than the receiver size ($L_{dis} \gg \rho$), the received power integral can be approximated as:
\begin{align}
    P_{Rx} &\approx C_0' \int_0^{2\pi}\int_0^{\pi/2}\int_0^{2\pi}\int_{0}^{\pi} (\cos\theta_{Rx})^m \sin\theta_{Rx}\sin\alpha \nonumber \\ 
    & \times | \sin\theta_{Rx}\sin\alpha\cos(\phi_{Rx}-\beta)+\cos\theta_{Rx}\cos\alpha | \nonumber
    \\&\times \mathrm{d}\alpha \, \mathrm{d}\beta \, \mathrm{d}\theta_{Rx} \, \mathrm{d}\phi_{Rx},
\end{align}
where $C_0' = \frac{\mathfrak{R}P_{Tx}(m+1)(\pi r_{circle}^2)e^{-c(\lambda,L)L_{dis}}}{16 \pi^3 L_{dis}^2 }$.
\end{lemma}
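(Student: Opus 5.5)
The plan is a far-field (zeroth-order) expansion of the integrand of \eqref{revi} in the small parameter $\rho/L_{dis}$, followed by evaluation of the two innermost integrals over the receiver disk.

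\textbf{Step 1: local geometry on the disk.} A point of the receiver disk sits at position $\mathbf{P}+\rho\,\mathbf{u}(\theta')$. Here $\mathbf{P}$ is the disk centre, with $|\mathbf{P}|=L_{dis}$ and polar angles $(\theta_{Rx},\phi_{Rx})$, and $\mathbf{u}(\theta')$ is a unit vector in the disk plane, orthogonal to the normal $\mathbf{n}(\alpha,\beta)$. Consequently
\begin{align}
r=L_{dis}\sqrt{1+2\tfrac{\rho}{L_{dis}}\hat{\mathbf{P}}\cdot\mathbf{u}+\tfrac{\rho^2}{L_{dis}^2}}.
\end{align}
Since $\rho\le r_{circle}\ll L_{dis}$, we get $r=L_{dis}\bigl(1+O(\rho/L_{dis})\bigr)$. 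The irradiance angle satisfies $\cos\theta=z/r$, which gives $\cos\theta=\cos\theta_{Rx}+O(\rho/L_{dis})$. The incidence angle $\gamma$ between the ray direction and $\mathbf{n}$ likewise satisfies $\cos\gamma=\hat{\mathbf{P}}\cdot\mathbf{n}+O(\rho/L_{dis})$. This last quantity is exactly $\sin\theta_{Rx}\sin\alpha\cos(\phi_{Rx}-\beta)+\cos\theta_{Rx}\cos\alpha$, the spherical dot product of the two unit vectors.

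\textbf{Step 2: replace the integrand by its zeroth-order value.} The approximation is
\begin{align}
\frac{\cos^m\theta}{r^2}e^{-c(\lambda,L)r}|\cos\gamma|\approx\frac{\cos^m\theta_{Rx}}{L_{dis}^2}e^{-c(\lambda,L)L_{dis}}|\cos\gamma_0|,
\end{align}
where $\gamma_0$ is the centre incidence angle. All dependence on $(\rho,\theta')$ then disappears.

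\textbf{Step 3: integrate over the disk.} The remaining disk integral is $\int_0^{2\pi}\int_0^{r_{circle}}\rho\,\mathrm{d}\rho\,\mathrm{d}\theta'=\pi r_{circle}^2$. Pulling the constants out of the prefactor $\frac12\mathfrak{R}P_{Tx}\frac{m+1}{8\pi^3}$ produces exactly $C_0'$. What is left is the stated four-fold integral over $(\alpha,\beta,\theta_{Rx},\phi_{Rx})$, with Jacobian factors $\sin\theta_{Rx}\sin\alpha$.

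\textbf{Main obstacle: the exponential factor.} This is the step that needs care. The factor is $e^{-c r}$, and its relative error is $c\rho\,\hat{\mathbf{P}}\cdot\mathbf{u}$. This error is small only if $c\,r_{circle}\ll1$, which I will state as holding for cm-scale apertures. In addition, the first-order terms in $\rho$ are odd in $\mathbf{u}(\theta')$ and therefore integrate to zero over $\theta'$. The error of the approximation is thus $O(r_{circle}^2/L_{dis}^2)$ relative, which justifies ``$\approx$''.

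A further subtlety is the absolute value $|\cos\gamma|$ near grazing incidence, where $\cos\gamma_0\approx0$. There the perturbation can flip the sign of $\cos\gamma$. However, $|\cdot|$ is Lipschitz, so the pointwise error bound still holds, and the grazing set contributes negligibly after integration.
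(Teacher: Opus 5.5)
Your proposal is correct and follows the paper's own argument: replace $r$, $e^{-c(\lambda,L)r}$ and $\cos\theta$ by their values at the disk centre, integrate $\rho\,\mathrm{d}\rho\,\mathrm{d}\theta'$ to get $\pi r_{circle}^2$, and absorb the prefactor $\frac{1}{2}\mathfrak{R}P_{Tx}\frac{m+1}{8\pi^3}$ into $C_0'$. (In \eqref{revi} the factor $|\cos\gamma|$ is already written in terms of the centre angles, so it needs no separate approximation.) Your error analysis goes beyond the paper, but needs two small corrections. First, after the odd first-order terms cancel, the residual relative error is $O(r_{circle}^2/L_{dis}^2 + c^2 r_{circle}^2)$, not $O(r_{circle}^2/L_{dis}^2)$ alone; the second term dominates whenever $cL_{dis}>1$. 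Second, the grazing-incidence concern does not arise: since $\mathbf{n}\cdot\mathbf{u}=0$, you have exactly $\cos\gamma=(L_{dis}/r)\,\hat{\mathbf{P}}\cdot\mathbf{n}$, which cannot change sign across the disk.
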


\begin{proof}
The approximation is based on the following simplifications for $L_{dis} \gg \rho$:
\begin{itemize}
    \item The distance to any point on the receiver is approximately the distance to its center: $r \approx L_{dis}$.
    \item The attenuation term is constant over the receiver surface: $e^{-c(\lambda,L)r} \approx e^{-c(\lambda,L)L_{dis}}$.
    \item The angle of emission is approximately the angle to the receiver's center: $\cos(\theta) \approx \cos\theta_{Rx}$.
\end{itemize}
Applying these simplifications, the terms depending on $r$ and $\theta$ can be moved outside the integral over the receiver surface ($\mathrm{d}A = \rho \mathrm{d}\rho \mathrm{d}\theta'$). The integral over the receiver surface then yields its area, $\int_0^{2\pi}\int_0^{r_{circle}} \rho \mathrm{d}\rho \mathrm{d}\theta' = \pi r_{circle}^2$. Substituting this area and the approximations into the general expression yields the result.
\end{proof}

\begin{theorem}[Average Received Power with Random Orientation]
When the receiver orientation and position are uniformly random over a hemisphere, the average received power, under the large distance approximation, is:
\begin{align}
    P_{Rx} \approx \frac{\mathfrak{R} P_{Tx} r_{circle}^2}{4 L_{dis}^2} e^{-c(\lambda,L)L_{dis}}
    \label{eq:40}.
\end{align}
\end{theorem}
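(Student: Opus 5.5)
The plan is to start from the large-distance approximation in the preceding Lemma and evaluate the remaining four-fold angular integral in closed form. The key observation is that the integrand factors into a part that depends only on the receiver position $(\theta_{Rx},\phi_{Rx})$ and an orientation part that depends on both. That orientation part, $\sin\alpha\,|\cos\gamma|$, is the integral of $|\mathbf{n}\cdot\mathbf{u}|$ over the unit sphere of normals $\mathbf{n}=(\sin\alpha\cos\beta,\sin\alpha\sin\beta,\cos\alpha)$ against the surface measure $\sin\alpha\,\mathrm{d}\alpha\,\mathrm{d}\beta$. Here $\mathbf{u}=(\sin\theta_{Rx}\cos\phi_{Rx},\sin\theta_{Rx}\sin\phi_{Rx},\cos\theta_{Rx})$ is the unit direction from Tx to Rx, and indeed $|\cos\gamma|=|\mathbf{n}\cdot\mathbf{u}|$ by the stated expression.

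\textbf{Step 1 (orientation integral).} I will evaluate the inner integral over $(\alpha,\beta)\in[0,\pi]\times[0,2\pi]$ first. The surface measure on the sphere is rotation-invariant, so this integral does not depend on $\mathbf{u}$. I can therefore rotate coordinates so that $\mathbf{u}$ is the north pole. Then $|\mathbf{n}\cdot\mathbf{u}|=|\cos\alpha|$, and
\begin{align}
\int_0^{2\pi}\!\!\int_0^{\pi} |\cos\alpha|\sin\alpha\,\mathrm{d}\alpha\,\mathrm{d}\beta = 2\pi\cdot 2\int_0^{\pi/2}\cos\alpha\sin\alpha\,\mathrm{d}\alpha = 2\pi .
\end{align}
This step is the only potential obstacle, because a brute-force evaluation of $|\sin\theta_{Rx}\sin\alpha\cos(\phi_{Rx}-\beta)+\cos\theta_{Rx}\cos\alpha|$ would require splitting the domain where the argument changes sign. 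I will avoid that entirely by invoking the rotational invariance argument above. If a more explicit justification is wanted, the same result follows from writing the integral as $\int_{S^2}|\mathbf{n}\cdot\mathbf{u}|\,\mathrm{d}\Omega$ and applying an orthogonal change of variables $\mathbf{n}\mapsto Q\mathbf{n}$ with $Q\mathbf{u}=\mathbf{e}_z$.

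\textbf{Step 2 (position integral).} With the orientation integral equal to the constant $2\pi$, what remains is
\begin{align}
2\pi\int_0^{2\pi}\!\!\int_0^{\pi/2}\cos^m\theta_{Rx}\sin\theta_{Rx}\,\mathrm{d}\theta_{Rx}\,\mathrm{d}\phi_{Rx}=2\pi\cdot 2\pi\cdot\frac{1}{m+1}=\frac{4\pi^2}{m+1},
\end{align}
using the substitution $t=\cos\theta_{Rx}$.

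\textbf{Step 3 (assembly).} Multiplying by $C_0'$ from the Lemma gives
\begin{align}
\frac{\mathfrak{R}P_{Tx}(m+1)\pi r_{circle}^2 e^{-c(\lambda,L)L_{dis}}}{16\pi^3L_{dis}^2}\cdot\frac{4\pi^2}{m+1},
\end{align}
which simplifies to $\frac{\mathfrak{R}P_{Tx}r_{circle}^2}{4L_{dis}^2}e^{-c(\lambda,L)L_{dis}}$, i.e. \eqref{eq:40}. I will also remark that the Lambertian order $m$ cancels exactly. This is because averaging the receiver position uniformly over the hemisphere absorbs the normalization factor $(m+1)$ of the Lambertian pattern.
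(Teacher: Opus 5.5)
Your proof is correct. It reaches the same value $4\pi^2 C_0'/(m+1)$ as the paper, but it handles the key step differently.

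\textbf{The paper's route.} The paper attacks the absolute value directly. It substitutes $\beta'=\phi_{Rx}-\beta$ to extract a factor $2\pi$ from the $\phi_{Rx}$ integral. It then evaluates $\int_0^{2\pi}|\sin\theta_{Rx}\sin\alpha\cos\beta'+\cos\theta_{Rx}\cos\alpha|\,\mathrm{d}\beta'$ piecewise in $\alpha$, with regimes split at $\alpha=\pi/2\mp\theta_{Rx}$; the middle regime produces $F(Y)=4\sqrt{1-Y^2}+2\pi Y-4Y\arccos Y$. Only after integrating over $\alpha$ does it integrate over $\theta_{Rx}$.

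\textbf{Your route.} Your identity $\int_{S^2}|\mathbf{n}\cdot\mathbf{u}|\,\mathrm{d}\Omega=2\pi$ for every unit $\mathbf{u}$ removes all of this casework. It also explains why the paper's intermediate bracket collapses. Since $\cos^3\theta_{Rx}+\sin^2(\theta_{Rx}/2)(3+2\cos\theta_{Rx}+\cos2\theta_{Rx})=1$, that bracket is exactly $2\pi\cos^m\theta_{Rx}\sin\theta_{Rx}$, which is your constant $2\pi$ times the Lambertian position weight.

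\textbf{What each buys.} Your version makes it transparent that the orientation average does not depend on where the receiver sits. All the $m$-dependence therefore sits in the elementary position integral, and the main-lobe corollary follows simply by changing the upper limit to $\phi_{1/2}$. The paper's explicit computation instead gives a formula valid pointwise in $(\theta_{Rx},\alpha)$. That formula would still be usable if the orientation law were not rotation invariant (e.g.\ normals confined to a cone about a preferred tilt), where your symmetry argument no longer applies.
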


\begin{remark}
In the stochastic framework, this aperture integration quantifies the expected optical flux under spatial uncertainty, rather than the deterministic coupling of a static link. Maximizing this metric is equivalent to maximizing the average captured power, serving as a robust reliability proxy for random deployments.
\end{remark}
\begin{proof}
Starting from the approximated integral in the Lemma, we perform the integration sequentially.
First, we substitute $\beta' = \phi_{Rx} - \beta$. Since the integration over $\phi_{Rx}$ is over a full period $[0, 2\pi]$, this step simplifies the dependency and the integral over $\phi_{Rx}$ yields a factor of $2\pi$. The integral becomes:
\begin{align}
&P_{Rx} \approx 2\pi C_0' \int_0^{\pi/2}\int_0^{2\pi}\int_{0}^{\pi} (\cos\theta_{Rx})^m \sin\theta_{Rx}\sin\alpha \nonumber\\
& \times | \!\sin\theta_{Rx}\!\sin\alpha\!\cos(\beta)\!+\ \!cos\theta_{Rx}\cos\alpha | \,\mathrm{d}\alpha  \mathrm{d}\beta  \mathrm{d}\theta_{Rx}.
\end{align}
The integral over $\beta$ from $0$ to $2\pi$ of the absolute value term is solved analytically, which results in a complex expression. As shown in the derivation, this leads to:
\begin{align}
P_{Rx} &\approx 2\pi C_0' \int_0^{\frac{\pi}{2}} \!\! \Bigg[\int_0^{\frac{\pi}{2} - \theta_{Rx}} \!\!\!\!\pi (\cos\theta_{Rx})^{m+1} \sin\theta_{Rx} \sin 2\alpha  \mathrm{d}\alpha \nonumber \\
& + \int_{\frac{\pi}{2} - \theta_{Rx}}^{\frac{\pi}{2} + \theta_{Rx}} (\cos\theta_{Rx})^m (\sin\theta_{Rx})^2 (\sin\alpha)^2 F(Y)  \mathrm{d}\alpha \nonumber \\
& + \int_{\frac{\pi}{2} + \theta_{Rx}}^{\pi} \!\!\!\!-\pi (\cos\theta_{Rx})^{m+1} \sin\theta_{Rx} \!\sin 2\alpha  d\alpha \!\Bigg]\!\mathrm{d}\theta_{Rx} ,
\end{align}
where
$Y = \frac{\cos\theta_{Rx} \cos\alpha}{\sin\theta_{Rx} \sin\alpha}$ and
$F(Y) = 4 \sqrt{1 - Y^2} + 2\pi Y - 4 Y \arccos(Y)$

This expression is then integrated ove $\alpha$. 
\begin{align}
P_{Rx} &\approx 2\pi C_0' \int_0^{\pi/2} \Bigg[ 2\pi (\cos\theta_{Rx})^{m+3} \sin\theta_{Rx}  \nonumber\\
&\quad+2 \pi \cos^m\theta_{Rx}\sin^2\frac{\theta_{Rx}}{2} \sin \theta_{Rx}\nonumber \\
&\quad \times (3+2\cos\theta_{Rx}+\cos2\theta_{Rx}) \Bigg] \mathrm{d}\theta_{Rx}.
\end{align}
Completing the final integration with respect to $\theta_{Rx}$ over $[0, \pi/2]$ yields:
\begin{align}
    P_{Rx} &\approx \frac{4 \pi^2 C_0'}{m+1} = \frac{\mathfrak{R} P_{Tx} r_{circle}^2}{4 L_{dis}^2} e^{-c(\lambda,L)L_{dis}}.
\end{align}
This confirms the stated result.
\end{proof}

\begin{corollary}[Power Received in Main Lobe]
If the receiver is located within the transmitter's main lobe, defined by the half-power angle $\phi_{1/2}$, the average received power is:
\begin{align}
    P_{Rx} = \frac{\mathfrak{R} P_{Tx} r_{circle}^2}{4 L_{dis}^2} e^{-c(\lambda,L)L_{dis}} ( 1 - \cos^{m+1}(\phi_{1/2}) )
    \label{eq:41},
\end{align}
where $m = -\ln(2)/\ln(\cos(\phi_{1/2}))$.
\end{corollary}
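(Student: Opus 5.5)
The plan is to reuse the proof of the preceding theorem verbatim, changing only the range of the polar angle $\theta_{Rx}$. The receiver center is now confined to the main lobe, so $\theta_{Rx}\in[0,\phi_{1/2}]$ instead of $[0,\pi/2]$.

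Start from the large-distance approximation of the Lemma with this truncated $\theta_{Rx}$ domain. The manipulations over the inner variables are independent of the range of $\theta_{Rx}$, so they carry over unchanged. These are:
\begin{itemize}
\item the substitution $\beta'=\phi_{Rx}-\beta$, which yields the factor $2\pi$;
\item the analytic evaluation of the $\beta$-integral of the absolute-value term, which produces the three $\alpha$-regimes with $F(Y)$;
\item the subsequent $\alpha$-integration.
\end{itemize}
We therefore arrive at the same one-dimensional integrand in $\theta_{Rx}$, now integrated over $[0,\phi_{1/2}]$.

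The key step is to recognise that, after the $\alpha$-integration, the bracketed integrand simplifies to a constant multiple of $\cos^m\theta_{Rx}\sin\theta_{Rx}$. This must hold, because the full-range result $4\pi^2C_0'/(m+1)$ equals $4\pi^2C_0'\int_0^{\pi/2}\cos^m\theta\sin\theta\,\mathrm{d}\theta$.

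I would verify this directly by expanding $\sin^2(\theta/2)=(1-\cos\theta)/2$ and $\cos2\theta=2\cos^2\theta-1$. This gives $\sin^2\frac{\theta}{2}(3+2\cos\theta+\cos2\theta)=(1-\cos\theta)(1+\cos\theta+\cos^2\theta)=1-\cos^3\theta$. Hence the bracket equals $2\pi\cos^m\theta\sin\theta\,[\cos^3\theta+1-\cos^3\theta]=2\pi\cos^m\theta\sin\theta$.

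This algebraic collapse is the main point to confirm. It is also where the main obstacle lies: the $\alpha$-integration in the theorem was done for generic $\theta_{Rx}$, and I must make sure no step there used $\theta_{Rx}\le\pi/2$ globally. The split points $\pi/2\pm\theta_{Rx}$ stay within $[0,\pi]$ for any $\theta_{Rx}\le\pi/2$, so the truncation is harmless.

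With the collapse in hand, the remaining integral is elementary:
\begin{align}
P_{Rx}\approx 4\pi^2C_0'\int_0^{\phi_{1/2}}\cos^m\theta\sin\theta\,\mathrm{d}\theta=\frac{4\pi^2C_0'}{m+1}\bigl(1-\cos^{m+1}\phi_{1/2}\bigr).
\end{align}
Substituting $4\pi^2C_0'/(m+1)=\frac{\mathfrak{R}P_{Tx}r_{circle}^2}{4L_{dis}^2}e^{-c(\lambda,L)L_{dis}}$ from the theorem gives \eqref{eq:41}. The order $m$ is tied to $\phi_{1/2}$ through the Lambertian relation, which simplifies the factor to $1-\cos\phi_{1/2}/2$.

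A remark on normalisation: the statement keeps the same prefactor $C_0'$ rather than renormalising the uniform density over the restricted angular domain. I would follow that convention, reading the result as the main-lobe contribution to the average power.
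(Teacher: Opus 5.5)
Your proposal is correct and is essentially the paper's proof: rerun the average-received-power derivation with the upper limit of $\theta_{Rx}$ changed from $\pi/2$ to $\phi_{1/2}$, keeping the same prefactor $C_0'$. Your explicit check that the bracket collapses to $2\pi\cos^m\theta_{Rx}\sin\theta_{Rx}$, via $\sin^2\frac{\theta_{Rx}}{2}(3+2\cos\theta_{Rx}+\cos 2\theta_{Rx})=1-\cos^3\theta_{Rx}$, supplies the step the paper only asserts, and your closing simplification $1-\cos^{m+1}\phi_{1/2}=1-\tfrac{1}{2}\cos\phi_{1/2}$ is also correct.
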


\begin{proof}
This result is obtained by following the proof of Theorem \ref{theorm32}, but changing the upper integration limit for $\theta_{Rx}$ from $\pi/2$ to $\phi_{1/2}$. The integration of $(\cos\theta_{Rx})^m \sin\theta_{Rx}$ terms leads directly to the factor $(1 - \cos^{m+1}(\phi_{1/2}))$.
\end{proof}

\begin{corollary}[Power with Pointing Error]
With a static pointing error offset angle $\delta$, the received power within the main lobe is modified to:
\begin{align}
\begin{aligned}
    P_{Rx} = &\frac{\mathfrak{R} P_{Tx}}{4} ( \frac{r_{circle}}{L_{dis}} )^2 e^{-c(\lambda,L)L_{dis}}\\
   & ( \cos^{m+1}(\delta) - \cos^{m+1}(\delta+\phi_{1/2}) ).
    \label{eq:42}
    \end{aligned}
\end{align}
This expression is valid for $\delta+\phi_{1/2} \leq \pi/2$.
\end{corollary}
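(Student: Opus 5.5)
The plan is to reuse the integral already reduced in the proof of the Average Received Power theorem, changing only the range of the irradiance angle. In that proof, after the substitution $\beta' = \phi_{Rx}-\beta$ and the integrations over $\beta$ and $\alpha$, the orientation average no longer depends on the azimuth. It collapses to a single integral over $\theta_{Rx}$ whose integrand is a polynomial in $\cos\theta_{Rx}$ times $\sin\theta_{Rx}$. The net effect, as the Main Lobe corollary shows, is that the orientation-averaged response is proportional to $\int (\cos\theta_{Rx})^m \sin\theta_{Rx}\,\mathrm{d}\theta_{Rx}$ with the prefactor $\frac{\mathfrak{R}P_{Tx}r_{circle}^2}{4L_{dis}^2}e^{-c(\lambda,L)L_{dis}}\,(m+1)$. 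This normalization is fixed by requiring that the full range $[0,\pi/2]$ reproduces \eqref{eq:40}.

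The modelling step is to interpret a static pointing offset $\delta$ as a rigid shift of the angular window the receiver occupies relative to the Tx axis. Instead of the main-lobe cone $\theta_{Rx}\in[0,\phi_{1/2}]$, the receiver positions effectively sweep the annular band $\theta_{Rx}\in[\delta,\delta+\phi_{1/2}]$. I will justify this by rotating the coordinate frame by $\delta$ about an axis perpendicular to $z$. Since the receiver normal $(\alpha,\beta)$ is averaged uniformly, that average is invariant under the rotation. The only quantity that sees the offset is therefore the Lambertian factor $\cos^m\theta$ evaluated at the shifted polar angle. The condition $\delta+\phi_{1/2}\le\pi/2$ guarantees that the band stays in the forward hemisphere, where $\cos\theta_{Rx}\ge 0$. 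It also guarantees that the piecewise splitting of the $\alpha$-integral at $\pi/2\pm\theta_{Rx}$ used in the theorem remains valid for every $\theta_{Rx}$ in the band.

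With this reduction, the computation is the antiderivative
\begin{align}
\int_{\delta}^{\delta+\phi_{1/2}} (\cos\theta_{Rx})^m\sin\theta_{Rx}\,\mathrm{d}\theta_{Rx} = \frac{\cos^{m+1}(\delta)-\cos^{m+1}(\delta+\phi_{1/2})}{m+1}.
\end{align}
Multiplying by the prefactor cancels the $(m+1)$ and yields \eqref{eq:42}. As a consistency check, setting $\delta=0$ should recover \eqref{eq:41}.

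The main obstacle is the second step. Justifying that the offset enters \emph{only} as a shift of the $\theta_{Rx}$ limits is delicate. Strictly, a tilted window is not an annulus about the $z$-axis, and the azimuthal integration over $\phi_{Rx}$ then couples to $\theta_{Rx}$. I would handle this by stating the modelling convention explicitly: the offset is measured in the same polar coordinate as the emission pattern, so the receiver occupies the polar band $[\delta,\delta+\phi_{1/2}]$ with full azimuthal symmetry. Under that convention, the $\phi_{Rx}$ and $(\alpha,\beta)$ integrations factor out exactly as in the theorem's proof, and the result follows.
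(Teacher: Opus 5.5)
Your proposal is correct and follows the paper's proof. The paper shifts the Lambertian angle to $\theta_{Rx}+\delta$ and integrates over $[0,\phi_{1/2}]$; after the substitution $u=\theta_{Rx}+\delta$, with the Jacobian $\sin\theta_{Rx}$ shifted as well (the stated formula requires this), that is exactly your integral of $(\cos u)^m\sin u$ over $[\delta,\delta+\phi_{1/2}]$. Your reliance on the reduced integrand is sound, since $\sin^2(\theta_{Rx}/2)(3+2\cos\theta_{Rx}+\cos 2\theta_{Rx})=1-\cos^3\theta_{Rx}$ collapses the theorem's bracket to exactly $2\pi\cos^m\theta_{Rx}\sin\theta_{Rx}$.

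Do drop the rigid-rotation justification, though. As you note yourself, a rotated cone is not the band $[\delta,\delta+\phi_{1/2}]$; the two do not even have the same solid angle. The result therefore rests, as in the paper, on the annular-band convention, which you state more explicitly than the paper does.
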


\begin{proof}
The pointing error modifies the effective angle in the Lambertian term to $(\theta_{Rx}+\delta)$. The integration with respect to $\theta_{Rx}$ is performed over $[0, \phi_{1/2}]$, leading to the change in the cosine terms.
\end{proof}

\begin{proposition}[Optimal Pointing Offset]
For a given half-power angle $\phi_{1/2}$, the received power is maximized by selecting an optimal offset $\delta_{opt}$:
\begin{align}
    \delta_{opt} = \frac{\pi}{12} + \frac{1}{2\pi}(\cos(2\phi_{1/2}+\frac{4}{3}\pi)-1).
\end{align}
\end{proposition}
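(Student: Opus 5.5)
The plan is to treat the power expression of the corollary on pointing error as a one-variable function of $\delta$, with $\phi_{1/2}$ fixed and $m$ tied to $\phi_{1/2}$ through $m=-\ln 2/\ln(\cos\phi_{1/2})$. Since the prefactor $\frac{\mathfrak{R}P_{Tx}}{4}(r_{circle}/L_{dis})^2e^{-c(\lambda,L)L_{dis}}$ does not depend on $\delta$, it suffices to maximize
\begin{align}
f(\delta)=\cos^{m+1}(\delta)-\cos^{m+1}(\delta+\phi_{1/2})
\end{align}
over the admissible interval $0\le\delta\le \pi/2-\phi_{1/2}$. Differentiating gives
\begin{align}
f'(\delta)=(m+1)\big[h(\delta+\phi_{1/2})-h(\delta)\big],\qquad h(x)=\cos^m(x)\sin(x).
\end{align}
Every stationary point must therefore satisfy $h(\delta)=h(\delta+\phi_{1/2})$.

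The first key step is to study $h$ on $[0,\pi/2]$. We have $h(0)=h(\pi/2)=0$, and $h'(x)=\cos^{m-1}x\,(\cos^2x-m\sin^2x)$. So $h$ is strictly increasing up to $x^\ast=\arctan(1/\sqrt m)$ and strictly decreasing afterwards. It follows that the equation $h(\delta)=h(\delta+\phi_{1/2})$ has a root only when $\delta<x^\ast<\delta+\phi_{1/2}$. The difference $h(\delta+\phi_{1/2})-h(\delta)$ is positive near $\delta=0$, since $h(0)=0<h(\phi_{1/2})$, and negative once $\delta>x^\ast$. On the straddling range one endpoint rises while the other falls, so this difference is strictly decreasing there. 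Hence there is a unique root $\delta_{opt}\in(\max(0,x^\ast-\phi_{1/2}),x^\ast)$, and $f'$ changes sign from $+$ to $-$ at it, so it is the global maximizer. This settles existence, uniqueness, and optimality. It also gives the physical reading: the Lambertian edge of the beam is most "productive" at the angle $x^\ast$, so it pays to tilt the aperture until that angle lies inside the integrated window rather than at its edge.

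The main obstacle is turning this implicit equation into the stated explicit formula. After substituting $m=m(\phi_{1/2})$, the condition
\begin{align}
\cos^m(\delta)\sin(\delta)=\cos^m(\delta+\phi_{1/2})\sin(\delta+\phi_{1/2})
\end{align}
is transcendental in $\delta$ and admits no exact elementary solution. I therefore expect the claimed expression to be an approximation rather than an identity. My approach would be to take logarithms, $m\ln\frac{\cos\delta}{\cos(\delta+\phi_{1/2})}=\ln\frac{\sin(\delta+\phi_{1/2})}{\sin\delta}$, and linearize around the midpoint $\delta+\phi_{1/2}/2\approx x^\ast$. This gives a first-order estimate $\delta_{opt}\approx x^\ast-\phi_{1/2}/2$, which is a smooth function of $\phi_{1/2}$ alone. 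I would then fit this curve over the practical range of $\phi_{1/2}$ by a single-harmonic form $a+b\cos(2\phi_{1/2}+c)$, whose coefficients should come out as $a=\pi/12-1/(2\pi)$, $b=1/(2\pi)$ and $c=4\pi/3$. The final step is to verify numerically that the residual of the implicit equation stays small over the range of interest. I would state plainly that the proposition holds in this approximate sense, while the uniqueness and maximality established above are exact.
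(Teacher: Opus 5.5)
Your existence, uniqueness and maximality argument is correct, and it proves more than the paper does. The paper's proof only solves $\sin\delta\cos^m\delta=\sin(\delta+\phi_{1/2})\cos^m(\delta+\phi_{1/2})$ numerically for various $\phi_{1/2}$, and presents the displayed formula as an empirical fit to those roots. It never shows that the stationary point is unique or a maximum; your unimodality of $h$ supplies that. One small repair is needed. For $\phi_{1/2}\gtrsim 51^\circ$, including the $60^\circ$ design point, one has $x^\ast>\pi/2-\phi_{1/2}$, so ``negative once $\delta>x^\ast$'' never occurs on the admissible interval. The sign change must instead come from $f'(\pi/2-\phi_{1/2})=-(m+1)h(\pi/2-\phi_{1/2})<0$, which places the root in $(\max(0,x^\ast-\phi_{1/2}),\min(x^\ast,\pi/2-\phi_{1/2}))$.

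The step that fails is the route to the explicit formula. Fitting $a+b\cos(2\phi_{1/2}+c)$ to your linearized estimate $x^\ast-\phi_{1/2}/2$ will not return the stated coefficients, because that estimate is the wrong curve to fit.

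\begin{itemize}
\item It is exact only at $\phi_{1/2}=60^\circ$. There $m=1$, $h(x)=\tfrac{1}{2}\sin 2x$ is symmetric about $\pi/4$, and the root is $\pi/12$.
\item It increases monotonically in $\phi_{1/2}$ toward $45^\circ$, even as the admissible interval shrinks to zero.
\item By contrast, the true optimum and the stated formula both rise to about $15^\circ$ near $60^\circ$ and then fall. The choice $c=4\pi/3$ is exactly what places the formula's peak at $60^\circ$.
\end{itemize}

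Numerically the linearization is poor away from $60^\circ$:
\begin{itemize}
\item At $30^\circ$ it gives $9.5^\circ$, against a true root of about $10.75^\circ$ (formula $10.44^\circ$).
\item At $45^\circ$ ($m=2$) the root is exactly $\tfrac{1}{2}(45^\circ-\arcsin(1-1/\sqrt{2}))\approx 13.98^\circ$. The linearization gives $12.8^\circ$ (formula $13.78^\circ$).
\item At $75^\circ$ it gives $16.9^\circ$, outside $[0^\circ,15^\circ]$, while the true root is about $12.3^\circ$.
\end{itemize}
So ``should come out as'' reads the coefficients off the target rather than deriving them.

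What is needed, and what the paper does, is to fit or check the formula against the exact numerical roots of $h(\delta)=h(\delta+\phi_{1/2})$. Your bracket makes bisection immediate. The result should then be stated as an approximation on an explicit range. The formula's error is at most about $0.3^\circ$ for $30^\circ\le\phi_{1/2}\le 65^\circ$ and about $1.5^\circ$ at $75^\circ$. Beyond roughly $76.5^\circ$ the formula even violates the validity condition $\delta+\phi_{1/2}\le\pi/2$.
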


\begin{proof}
 The optimal offset $\delta_{opt}$ is found by numerically solving the equation $\sin(\delta)\cos^m(\delta) = \sin(\delta+\phi_{1/2})\cos^m(\delta+\phi_{1/2}) $ for various values of $\phi_{1/2}$. The formula presented here is a highly accurate empirical approximation to these numerical results. Fig. \ref{fig:Error3} shows the high correspondence between numerical and analytical results.
\end{proof}

Physically, the optimality of $\delta_{opt}$ arises from the trade-off between radiant intensity and spatial geometry. In the stochastic integral of expected power \eqref{revi}, the beam intensity peaks at the optical axis ($\theta_{Rx}=0$), whereas the spatial volume element, scaled by the Jacobian determinant $\sin(\theta_{Rx})$, is minimal at the axis and increases with the polar angle. Consequently, a vertical alignment inefficiently concentrates reception sensitivity on a region with negligible spatial probability. The offset strategy rotates the receiver to statistically align with the angular region where the product of photon flux and spatial volume is maximized, thereby optimizing the expected energy capture across the random node distribution.

To quantify the performance gap, we establish an ideal PAT scenario as the theoretical upper bound, denoted as $P_{Rx}^{PAT}$. In this deterministic regime, perfect knowledge of location and orientation allows the transmitter and receiver to maintain precise alignment. Consequently, the transmission and incidence angles are nullified ($\theta = \psi = 0$), and the stochastic orientation coefficient is removed. 

However, maintaining such perfect alignment is practically infeasible in dynamic underwater environments. Underwater nodes are subject to continuous hydrodynamic disturbances , and practical mechanical steering systems often exhibit residual pointing errors ranging from $\pm 10^\circ$ to $\pm 60^\circ$ depending on the device's stability and response time\cite{saeed2019underwater}.

Under far-field conditions ($L_{dis} \gg r_{circle}$), the received power can be directly derived by integrating the intensity over the receiver aperture $A_r$, yielding the standard LOS link budget with maximum axial gain,
\begin{equation}
    P_{Rx}^{PAT} (\epsilon) \approx \frac{\mathfrak{R} P_{Tx} (m+1) A_r}{2\pi L_{dis}^2} e^{-c(\lambda,L)L_{dis}} \cos^m(\epsilon),
    \label{eq:PAT_limit}
\end{equation}
where $\epsilon$ is pointing error. To contextualize the proposed strategy against this benchmark, Fig.~\ref{fig:PAT_comparison} presents a comparative analysis of normalized received power. The evaluation contrasts the active PAT system subject to $\epsilon$ against the passive stochastic strategies, specifically the baseline configuration ($\delta=0^\circ$) and the proposed offset strategy ($\delta=\delta_{opt}$).
\begin{figure}[!b]
    \centering
    \includegraphics[width=\columnwidth]{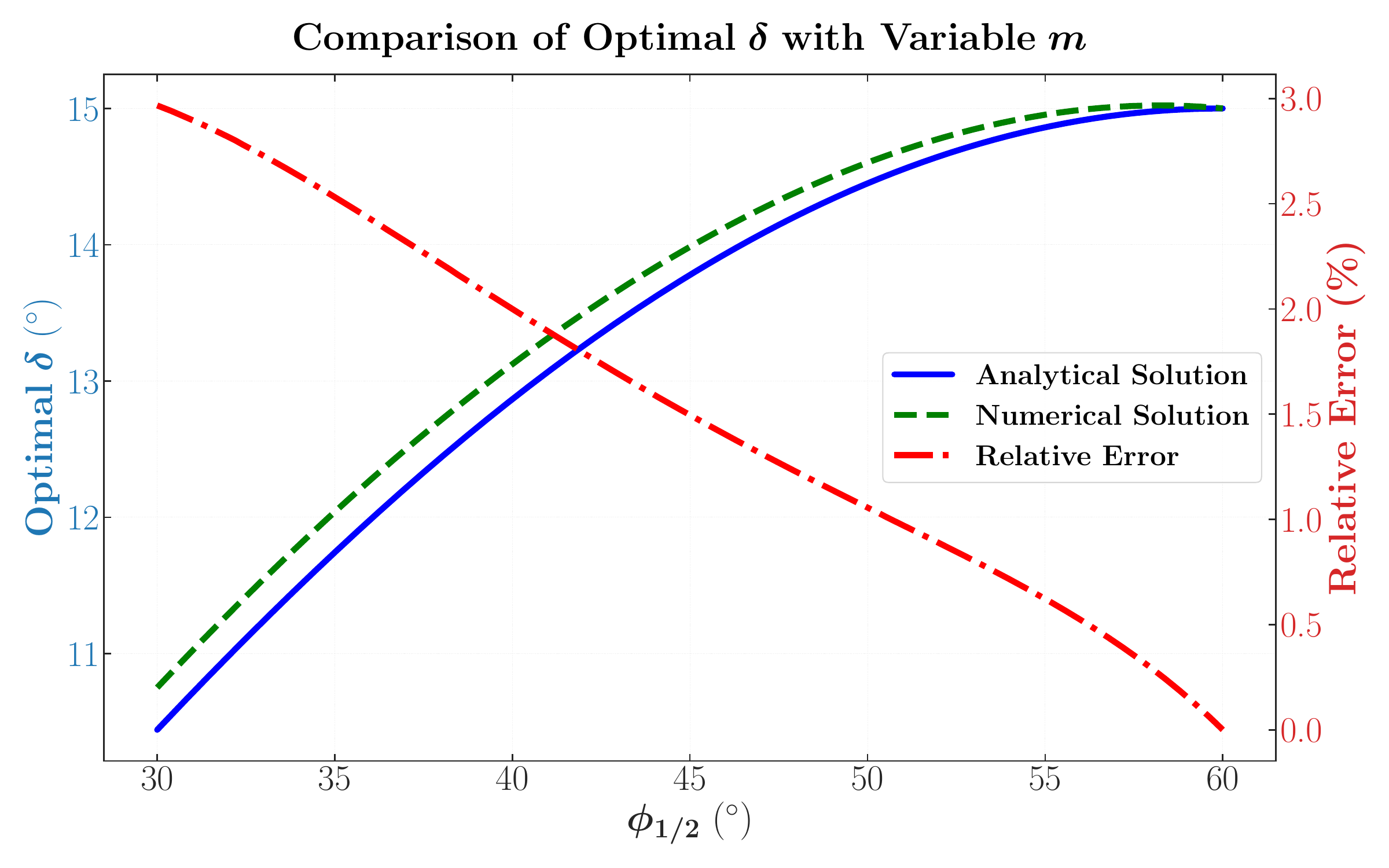}
    \caption{Comparison between closed-form solution and numerical solution for optimal $\delta$ in \eqref{eq:42}}
    \label{fig:Error3}
\end{figure}
% --- Figure Insertion ---
\begin{figure}
    \includegraphics[width=\columnwidth]{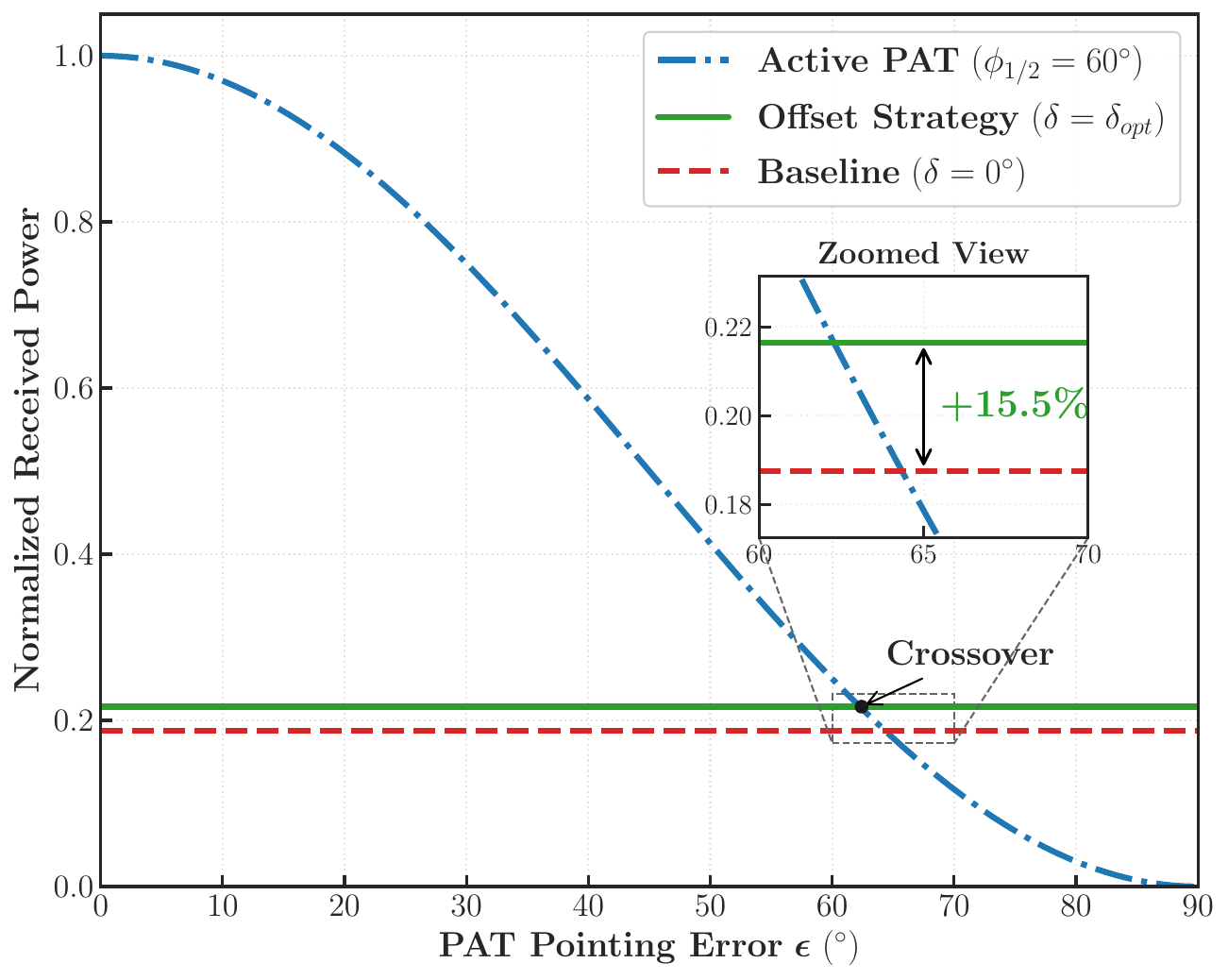} 
     \caption{Normalized received power comparison: active PAT with pointing error $\epsilon$ vs. passive stochastic strategies for $\phi_{1/2}=60^\circ$.}
    \label{fig:PAT_comparison}
\end{figure}
% ------------------------

The numerical results elucidate a fundamental trade-off between complexity and robustness. While the ideal PAT yields the theoretical maximum, the passive offset strategy establishes a stable performance floor without requiring complex servo-mechanisms. Notably, for broad-beam configurations ($\phi_{1/2}=60^\circ$), the curves exhibit a critical intersection point at approximately $62.5^\circ$. This indicates that the proposed passive offset strategy outperforms an active PAT system when the tracking error $\epsilon$ exceeds $62.5^\circ$, thereby offering a robust and cost-effective alternative for nodes with lower tracking precision.
    %%%%%%%%%%%%%%%%%%%%%%%%%%%%%%%%%%%%%%%%%%%%%%%%%%%%%%%%%%%%%%%%%

    %%%%%%%%%%%%%%%%%%%%%%%%%%%%%%%%%%%%%%%%%%%%%%%%%%%%%%%%%%%%%%%%%
    %%%%%%%%%%%%%%%%%%%%%%%%%%%%%%%%%%%%%%%%%%%%%%%%%%%%%%%%%%%%%%%%%
\begin{figure*}[htbp]
    \centering
    \subfloat[3D heat map of normalized $P_{Rx}$ under the influence of different $\delta$ and $\phi_{1/2}$]{
        \includegraphics[width=0.3\textwidth]{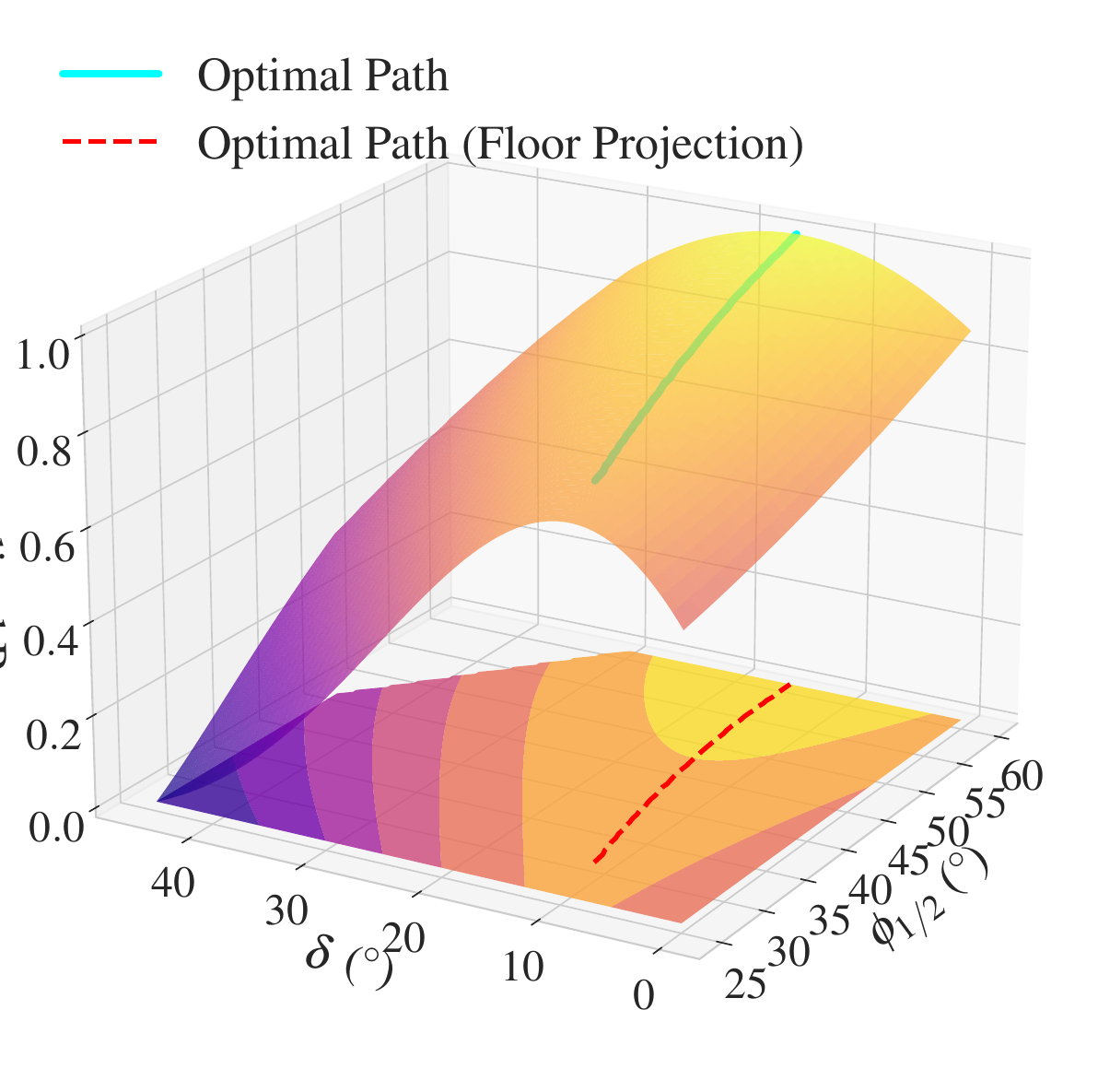}
        \label{fig:sub1}
    }
    \hfill
    \subfloat[2D heat map of normalized $P_{Rx}$ under the influence of different $\delta$ and $\phi_{1/2}$]{
        \includegraphics[width=0.3\textwidth]{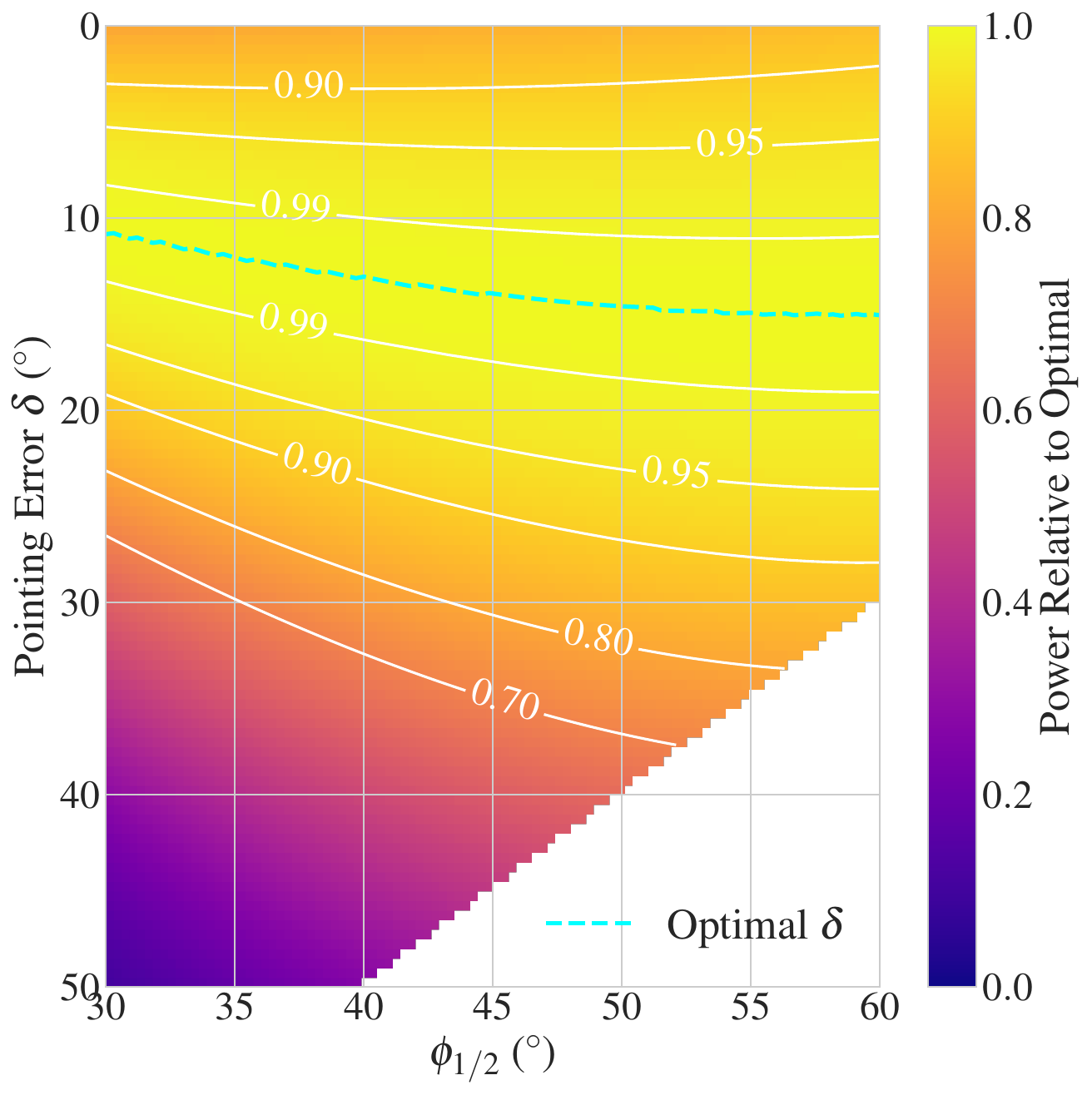}
        \label{fig:sub2}
    }
    \hfill
    \subfloat[Relationship curve between normalized $P_{Rx}$ and $\delta$ under typical $\phi_{1/2}$]{
        \includegraphics[width=0.3\textwidth]{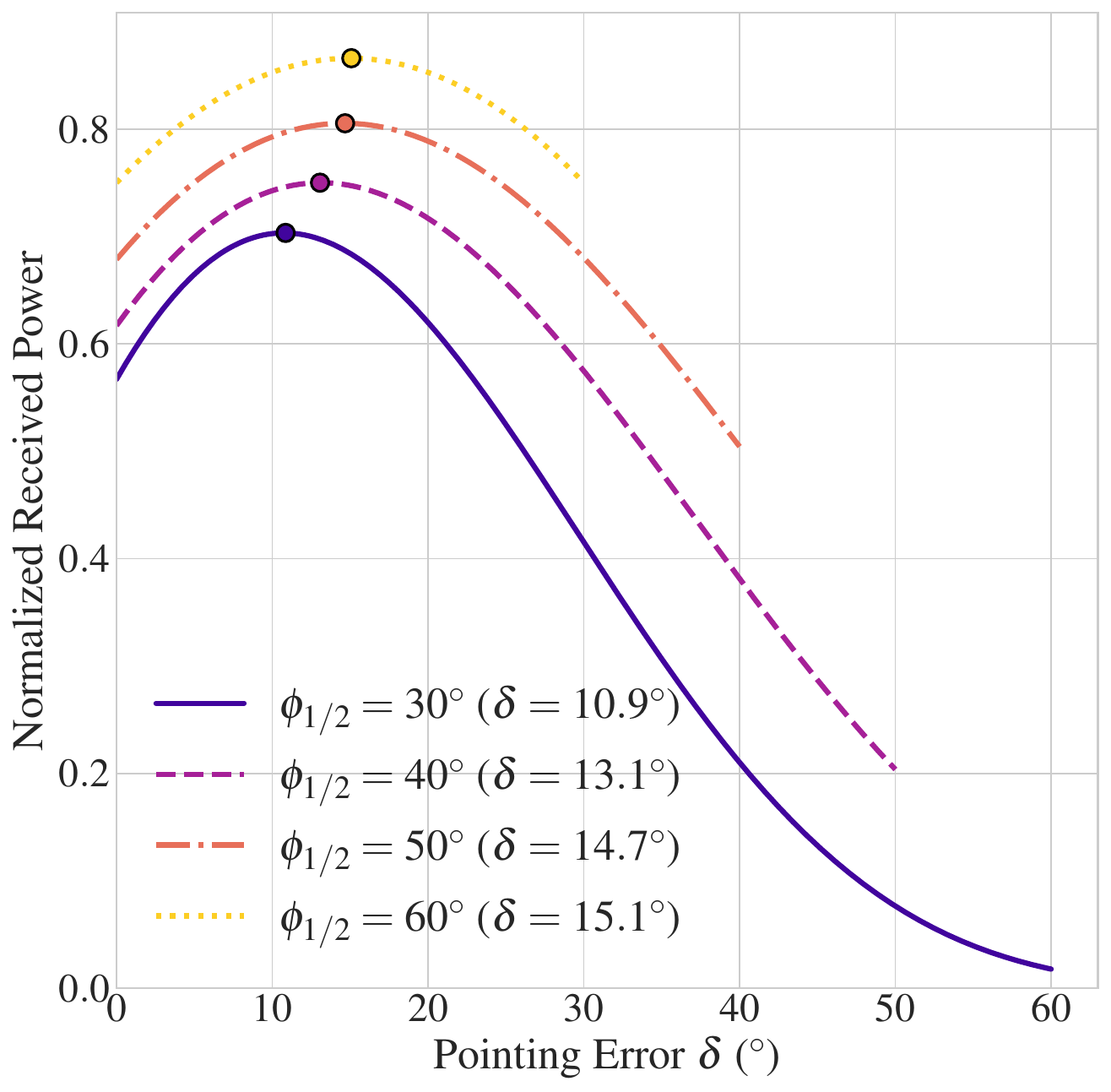}
        \label{fig:sub3}
    }
    \caption{Simulation results showing the effect of $\delta$ on $P_{Rx}$ under different performance forms}
    \label{fig:three_subfigs}
\end{figure*}
    %%%%%%%%%%%%%%%%%%%%%%%%%%%%%%%%%%%%%%%%%%%%%%%%%%%%%%%%%%%%%%%%%

    \begin{figure*}[!b]
 \hrulefill

 \begin{equation}
 \label{eq:fully_expanded_snr}
 \text{SNR} = \frac
 { ( \frac{\eta \lambda q}{hc} G P_{Rx} )^2  }
 {
     \begin{aligned}
         & 2qBG^2 ( \frac{1}{1 + \ln(1 - P_{\text{ct}})} )\!\!\!\!\!\!
         & \left[
             ( \frac{\eta \lambda q}{ hc}  P_{Rx} ) + I_d +
             ( \frac{\eta \lambda q}{hc}  ( \frac{\pi D^2}{4} \phi_{\text{FOV}}^2 L_f \zeta_r E_{\text{sun}}(z=0) e^{-\epsilon L_{\text{deep}}} \Delta\lambda T_s(\psi) ) )
         \right]
         + \frac{4 \kappa T B}{R_L}
     \end{aligned}
 }.
 \end{equation}
  
 \end{figure*}

\subsection{SNR and BER}
\label{sec:snr}
The significant attenuation of optical signals in the underwater environment makes the choice of photodetector critical. In this work, a Silicon Photomultiplier (SiPM) is employed as the receiver due to its distinct advantages for UOWC. A key feature of the SiPM is its high internal gain, typically on the order of $10^6$, which is a consequence of its architecture comprising an array of series-connected single-photon avalanche diodes (SPADs). This characteristic yields high sensitivity to weak signals, rendering the SiPM highly suitable for photon-starved UOWC links \cite{gundacker2020silicon}.

\begin{remark}
The high internal gain ($G$) of the SiPM allows it to generate a substantial photocurrent even from a low received optical power ($P_{Rx}$), which is fundamental for detecting signals after severe underwater attenuation.
\end{remark}

\noindent The resulting signal photocurrent, $I_{SiPM}$, is modeled as follows.

\begin{proposition}[Signal Photocurrent]
The signal photocurrent generated by the SiPM in response to a received optical power $P_{Rx}$ is given by:
\begin{align}
    I_{SiPM} = \frac{\eta\lambda q}{hc} G P_{Rx},
    \label{eq:sipm_photocurrent}
\end{align}
where $\eta$ is the photon detection efficiency (PDE), $\lambda$ is the wavelength, $q$ is the elementary charge, $h$ is Planck's constant, $c$ is the speed of light, and $G$ is the internal gain of the SiPM.
\end{proposition}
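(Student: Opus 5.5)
The statement is essentially a physical modeling identity, so the plan is to build it as a chain of conversions from received optical power to photocurrent, not by appeal to any earlier result in the paper.

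First, convert optical power into a photon arrival rate. A photon of wavelength $\lambda$ carries energy $E_{ph}=hc/\lambda$, so a received optical power $P_{Rx}$ corresponds to a mean photon flux of $P_{Rx}/E_{ph} = \lambda P_{Rx}/(hc)$ photons per second impinging on the active area.

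Second, invoke the definition of the photon detection efficiency. $\eta$ is the probability that an incident photon triggers an avalanche in some SPAD microcell, where the probability already lumps together quantum efficiency, fill factor and triggering probability. By linearity of expectation, the mean rate of primary detection events is $\eta\lambda P_{Rx}/(hc)$. Each such event produces a charge avalanche of $G$ elementary charges on average, with $G$ the internal gain. Multiplying rate by charge per event gives the mean current $I_{SiPM} = q G \eta \lambda P_{Rx}/(hc)$, which is the claimed expression.

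The main point requiring care is justifying linearity, i.e.\ that the detected count rate scales proportionally with the photon flux. This needs two assumptions, which I would state explicitly:
\begin{itemize}
\item Operation is in the photon-starved regime typical of UOWC, so the number of photons per dead time is small relative to the number of microcells and saturation or nonlinear recovery effects are negligible.
\item Dark counts and crosstalk are excluded from the signal term and treated separately as noise, which is consistent with how $I_d$ and $P_{ct}$ enter the noise variance in \eqref{eq:fully_expanded_snr}.
\end{itemize}
Under these assumptions the mean-value derivation is exact. Fluctuations around the mean are deferred to the shot-noise term of the SNR.
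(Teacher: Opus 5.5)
Your derivation is correct. The paper gives no proof of this proposition; it adopts the formula as the standard SiPM signal model, namely the responsivity $\eta q\lambda/(hc)$ multiplied by the internal gain $G$. Your chain (photon flux $\lambda P_{Rx}/(hc)$, then detection rate via $\eta$, then charge $qG$ per avalanche) is exactly the reasoning implicit in that model.

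Your explicit assumptions are needed, not pedantic, and the crosstalk one most of all. The paper takes $F = 1/(1+\ln(1-P_{ct}))$ from Vinogradov's branching crosstalk model. In that model, crosstalk also raises the mean number of fired cells per primary by the same factor. The stated formula therefore holds only if crosstalk is kept out of the mean signal, as you assume, or if $G$ is read as an effective gain that already includes it.
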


\begin{theorem}[Total Noise Variance]
The overall performance of the SiPM-based receiver is limited by several noise sources. The total noise variance, $\sigma_{total}^2$, is the linear superposition of four primary components: quantum shot noise ($\sigma_q^2$), dark current noise ($\sigma_d^2$), solar background noise ($\sigma_{\text{solar}}^2$), and thermal noise ($\sigma_{\text{th}}^2$).
\begin{align}
    \sigma_{total}^2 = \sigma_q^2 + \sigma_d^2 + \sigma_{\text{solar}}^2 + \sigma_{\text{th}}^2.
\end{align}
\end{theorem}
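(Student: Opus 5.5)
The plan is to write the receiver output current as the sum of independent (or at least mutually uncorrelated) zero-mean fluctuation terms around its mean and then use additivity of variance. The total photocurrent at the SiPM output is
\begin{align}
    i(t) = I_{SiPM} + i_q(t) + i_d(t) + i_{solar}(t) + i_{th}(t),
\end{align}
where $I_{SiPM}$ is the deterministic signal term of \eqref{eq:sipm_photocurrent}. The fluctuation terms are:
\begin{itemize}
    \item $i_q(t)$, the shot noise of the signal photoelectrons;
    \item $i_d(t)$, the fluctuation of the dark-count (thermally generated carrier) avalanche process;
    \item $i_{solar}(t)$, the shot noise of photoelectrons generated by the background radiation $P_{sun}$;
    \item $i_{th}(t)$, the Johnson noise of the load resistor $R_L$.
\end{itemize}
The claim then reduces to $\sigma_{total}^2 = \mathrm{Var}[i(t)] = \sum_k \mathrm{Var}[i_k(t)]$, which holds whenever the cross-covariances vanish.

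The key step is to justify that the cross terms are zero.

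First, $i_{th}$ originates in the thermal agitation of electrons in $R_L$. It is therefore statistically independent of every photon- or carrier-generated process in the detector, so its covariance with each of the other terms is zero.

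Second, the three Poisson-driven terms — signal photons, background photons and dark carriers — arise from physically distinct primary events. I would model each as an independent Poisson point process of primary triggers, each passed through the same avalanche multiplication with gain $G$, excess noise factor $F$ and crosstalk. Superposing independent Poisson processes gives a Poisson process whose rate is the sum of the rates. Campbell's theorem then gives the variance of the filtered, marked process as the sum of the rate-weighted second moments of the pulse response. In white-noise form over bandwidth $B$, each term takes the shape $2qBG^2F\,\bar I_k$, with $F$ determined by the crosstalk probability (the $1/(1+\ln(1-P_{ct}))$ factor seen in \eqref{eq:fully_expanded_snr}). This yields $\sigma_q^2+\sigma_d^2+\sigma_{solar}^2$ directly and also shows the structure of the bracketed sum in \eqref{eq:fully_expanded_snr}.

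Finally, I would add the thermal term $4\kappa T B/R_L$ and conclude the stated superposition.

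The main obstacle is the independence argument for the avalanche-multiplied components. Crosstalk and afterpulsing introduce correlations between successive avalanches, even across sources, so the processes are not strictly Poisson. I would handle this by absorbing the crosstalk into the excess noise factor as a per-trigger random mark (a compound Poisson model with i.i.d. marks). Afterpulsing would be neglected within bandwidth $B$. Under that modeling assumption the cross-covariances between the sources vanish, and the linear superposition holds exactly.
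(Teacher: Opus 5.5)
Your proposal is correct and takes the same route as the paper: treat the four noise sources as mutually independent and add their variances. The paper merely asserts this independence while listing each variance from standard formulas; you justify it through independent compound-Poisson trigger processes with i.i.d.\ crosstalk marks, Campbell's theorem, and the physical independence of Johnson noise. As a side note, your common per-source form $2qBG^2F\bar{I}_k$ agrees with the bracketed sum in \eqref{eq:fully_expanded_snr}, but the paper's individual expressions for $\sigma_d^2$ and $\sigma_{\text{solar}}^2$ carry one fewer explicit factor of $G$; this inconsistency within the paper does not affect the superposition identity being proved.
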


\begin{proof}
The total noise variance is derived by modeling each contributing source independently.
\begin{itemize}
    \item \textbf{Quantum Shot Noise}: This noise arises from the statistical fluctuations in photon arrival. Its variance is given by \cite{ghassemlooy2019optical}:
\begin{align}
        \sigma_q^2 = 2q I_{SiPM} B F G,
    \end{align}
    where $B$ is the receiver bandwidth and $F$ is the excess noise factor.
    
    \begin{remark}
    The excess noise factor, $F$, accounts for the additional noise introduced by the avalanche multiplication process, which is strongly influenced by secondary effects like optical crosstalk between SPAD cells. According to the model in \cite{vinogradov2012analytical}, it is related to the crosstalk probability, $P_{ct}$, by:
    \begin{align}
        F = \frac{1}{1+\ln(1-P_{ct})}.
    \end{align}
    \end{remark}
    
    \item \textbf{Dark Current Noise}: This intrinsic noise is generated by thermal excitation within the semiconductor material, even without illumination \cite{ghassemlooy2019optical}. Its variance is also subject to avalanche multiplication:
\begin{align}
        \sigma_d^2 = 2q I_d B F G,
    \end{align}
    where $I_d$ is the dark current.
    
    \item \textbf{Ambient Solar Noise}: In UOWC, a dominant noise contributor is the ambient background light from solar radiation. Following the methodology of \cite{1605919, 4686801}, we model the solar irradiance at depth $L_{deep}$ and the corresponding optical power $P_{sun}$ captured by the receiver:
    \begin{align}
        P_{sun} = A_r \phi_{FOV}^2 L_f \zeta_r E_{sun}(z=0) e^{-\epsilon L_{deep}} T_s(\psi),
    \end{align}
    where $A_r = \pi D^2/4$ is the receiver aperture area, $\phi_{FOV}$ is the half-angle field of view, $L_f$ is a directional dependence factor, $\zeta_r$ is the solar reflectance, $E_{sun}(z=0)$ is the surface irradiance (e.g., $1000 \, \text{W/m}^2$), $\epsilon$ is the solar radiation attenuation coefficient, and $T_s(\psi)$ is the optical system transmittance\footnote{Note that this formula offers a generalized model covering all depths. Through the attenuation term $e^{-\epsilon L_{deep}}$, the equation mathematically ensures that $P_{sun}$ goes to $0$ as depth increases. This guarantees that the model automatically reflects the physical reality where deep-sea scenarios (e.g., Level 3 in Section \ref{sec:SR}) are dominated solely by thermal and dark current noise.}  . This induces a photocurrent with a shot noise variance of:
    \begin{align}
        \sigma_{\text{solar}}^2 = 2q ( \frac{\eta\lambda q}{hc} G P_{sun} ) B F.
    \end{align}
    
    \item \textbf{Thermal Noise}: This noise, also known as Johnson-Nyquist noise, is generated by the electronic components, particularly the load resistance $R_L$. Its variance is independent of the optical signal:
    \begin{align}
        \sigma_{th}^2 = \frac{4\kappa TB}{R_{L}},
    \end{align}
    where $\kappa$ is the Boltzmann constant and $T$ is the absolute temperature.
\end{itemize}
The linear summation of these independent noise variances yields the total noise variance as stated in the theorem.
\end{proof}

\begin{theorem}[Overall Electrical SNR]
The overall electrical SNR is defined as the ratio of the squared signal photocurrent to the total noise variance:
\begin{align}
    \mathrm{SNR} = \frac{I_{SiPM}^2}{\sigma_{total}^2} = \frac{I_{SiPM}^2}{\sigma_q^2 + \sigma_d^2 + \sigma_{\text{solar}}^2 + \sigma_{th}^2}.
    \label{eq:snr_definition}
\end{align}
Substituting the expressions for the signal and noise components yields the comprehensive model presented in \eqref{eq:fully_expanded_snr}.
\end{theorem}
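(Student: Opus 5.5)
The plan is to assemble the expanded expression in \eqref{eq:fully_expanded_snr} by substituting the component models into the ratio that defines the SNR, then simplify. No limiting argument is required. The task is to check that each factor in the numerator and denominator of \eqref{eq:fully_expanded_snr} matches the corresponding physical model.

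\textbf{Numerator.} Apply the Signal Photocurrent proposition, \eqref{eq:sipm_photocurrent}, so that $I_{SiPM}^2 = \big(\frac{\eta\lambda q}{hc} G P_{Rx}\big)^2$. Here $P_{Rx}$ is left symbolic. It may later be replaced by the orientation-averaged expression \eqref{eq:40}, the main-lobe form \eqref{eq:41}, or the offset-pointing form \eqref{eq:42}, which is why the SNR is written as a function of $(P_{Tx}, L_{dis}, \delta, \phi_{1/2})$ in \eqref{eq:snr_functional}.

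\textbf{Denominator.} Use the Total Noise Variance theorem and substitute each component in turn.
\begin{itemize}
\item For $\sigma_q^2 = 2qI_{SiPM}BFG$, insert $I_{SiPM}$ and write the result as $2qBG^2F\cdot\frac{\eta\lambda q}{hc}P_{Rx}$.
\item For $\sigma_d^2 = 2qI_dBFG$, the dark current is the multiplied quantity, giving a term $2qBGF I_d$.
\item For $\sigma_{solar}^2 = 2q\big(\frac{\eta\lambda q}{hc}GP_{sun}\big)BF$, insert $P_{sun}$ with $A_r = \pi D^2/4$.
\item Replace $F$ everywhere by $1/(1+\ln(1-P_{ct}))$, following the remark in the noise theorem.
\end{itemize}
The three shot-type terms share the prefactor $2qBF$ and a power of $G$, so they are grouped into a single bracket of primary currents $\frac{\eta\lambda q}{hc}P_{Rx} + I_d + \frac{\eta\lambda q}{hc}P_{sun}$. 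The thermal term $4\kappa TB/R_L$ is added separately.

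\textbf{Main obstacle.} The difficulty is reconciling the gain exponents so that the bracket can be factored as in \eqref{eq:fully_expanded_snr} with a common $G^2$. As written, the signal shot term carries $G^2$, while the dark and solar terms carry only $G$. To obtain the displayed form I would adopt the standard avalanche-detector convention that all primary currents (signal, dark bulk current, background) are multiplied before generating shot noise. This means interpreting $I_d$ and $\frac{\eta\lambda q}{hc}P_{sun}$ as unmultiplied primary currents whose noise scales as $2qBG^2F$, consistent with $\sigma^2 = 2q I_{prim} B G^2 F$. With that convention fixed, the common factor $2qBG^2F$ pulls out of the bracket. Dividing the numerator by the resulting denominator gives \eqref{eq:fully_expanded_snr} directly. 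I would state this convention explicitly as the step that justifies the grouping.
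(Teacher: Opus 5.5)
Your route is the paper's route. Its entire justification is the sentence that substitution ``yields'' \eqref{eq:fully_expanded_snr}, and you were right to check that claim rather than trust it. With the variances exactly as stated in the Total Noise Variance theorem, the shot-type terms sum to
\[
\sigma_q^2+\sigma_d^2+\sigma_{\text{solar}}^2
=2qBFG\Bigl[G\,\tfrac{\eta\lambda q}{hc}P_{Rx}+I_d+\tfrac{\eta\lambda q}{hc}P_{sun}\Bigr],
\]
so the common $G^2$ of the display does not follow. Your fix is legitimate, but call it what it is in the write-up: a replacement of two premises, not an interpretation of the stated ones. You are using $\sigma_d^2=2qI_dBFG^2$ and $\sigma_{\text{solar}}^2=2q\bigl(\tfrac{\eta\lambda q}{hc}GP_{sun}\bigr)BFG$, the latter now having the same form as $\sigma_q^2$. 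The two readings are not numerically interchangeable. The tabulated $I_d=154$~nA is the size of a measured, already-multiplied SiPM dark current; read as a primary current, it would imply an output dark current of about $0.15$~A. Your convention therefore changes the dark-noise term by a factor $G=10^6$.

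The step that actually fails is your closing claim that the division then gives \eqref{eq:fully_expanded_snr} ``directly''. The solar term in the display is $\tfrac{\eta\lambda q}{hc}\bigl(\tfrac{\pi D^2}{4}\phi_{\text{FOV}}^2 L_f \zeta_r E_{\text{sun}}(z=0)e^{-\epsilon L_{\text{deep}}}\,\Delta\lambda\,T_s(\psi)\bigr)$. The paper's $P_{sun}$ contains no filter bandwidth $\Delta\lambda$, so inserting $P_{sun}$ with $A_r=\pi D^2/4$, as you propose, leaves you off by exactly that factor. This is not a harmless constant. With $E_{\text{sun}}(z=0)$ in $\mathrm{W/m^2}$, as the table states, the display's solar term has units of $\mathrm{A\cdot nm}$ and is not a current. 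To reach the displayed form you need a second explicit reconciliation: treat $E_{\text{sun}}(z=0)$ as a spectral irradiance (per unit wavelength) and include $\Delta\lambda$ in $P_{sun}$, as in the standard underwater solar-background model. The substitution closes only once both modifications are stated. Without them, the theorem's claim does not hold for the component formulas the paper actually writes down.
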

%%%%%%%%%%%%%%%%%%%%%%%%%%%%%%%%%%%%%%%%%%%%%%%%%%%%%%%%%%%%%%%%%

 %%%%%%%%%%%%%%%%%%%%%%%%%%%%%%%%%%%%%%%%%%%%%%%%%%%%%%%%%%%%%%%%%

\begin{corollary}[Bit Error Rate for NRZ-OOK]
Assuming the aggregate noise follows a Gaussian distribution, a reasonable approximation for a large number of detected photoelectrons, the theoretical BER for a non-return-to-zero on-off keying (NRZ-OOK) modulated signal is calculated from the SNR as:
\begin{align}
    \mathrm{BER} = \frac{1}{2} \mathrm{erfc}(\sqrt{\frac{\mathrm{SNR}}{2}}),
    \label{eq:ber}
\end{align}
where $\text{erfc}(\cdot)$ denotes the complementary error function.
\end{corollary}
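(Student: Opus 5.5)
The BER expression follows from a standard binary-detection argument under the Gaussian noise assumption, with the SNR of \eqref{eq:snr_definition} identified with the appropriate ratio of signal amplitude to noise standard deviation. I would model the decision statistic at the SiPM output in a bit interval. For bit ``1'', the sample is $I_1 = I_{SiPM} + n$. For bit ``0'', it is $I_0 = 0 + n$, where $n \sim \mathcal{N}(0,\sigma_{total}^2)$. Here $I_{SiPM}$ is the signal photocurrent of \eqref{eq:sipm_photocurrent}, and $\sigma_{total}^2$ is the total noise variance from the Total Noise Variance theorem. The Gaussianity of $n$ is justified by the central limit theorem, since the shot, dark and solar components are Poisson counts with a large mean number of photoelectrons and the thermal term is already Gaussian. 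As a simplifying step, I would treat the variance as the same under both hypotheses, which is the usual approximation behind this formula.

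Next, with equiprobable bits and equal variances, the maximum-likelihood (and minimum-error) threshold is the midpoint $I_{th} = I_{SiPM}/2$. The error probability is then
\begin{align}
\mathrm{BER} &= \tfrac12 \P(n > I_{SiPM}/2) + \tfrac12 \P(n < -I_{SiPM}/2) \nonumber\\
&= Q\!\left(\frac{I_{SiPM}}{2\sigma_{total}}\right) = \frac12 \erfc\!\left(\frac{I_{SiPM}}{2\sqrt{2}\,\sigma_{total}}\right),
\end{align}
using $Q(x)=\tfrac12\erfc(x/\sqrt2)$.

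The remaining step is to reconcile the argument of $\erfc$ with $\sqrt{\mathrm{SNR}/2}$, and I expect this to be the main obstacle, since it is a matter of convention rather than derivation. With \eqref{eq:snr_definition} taken literally as $I_{SiPM}^2/\sigma_{total}^2$, the argument above is $\sqrt{\mathrm{SNR}/8}$. The stated form corresponds to the common NRZ-OOK convention in which the decision statistic is $\pm I_{SiPM}$ around a midpoint threshold, i.e. the signal amplitude measured from the threshold equals $I_{SiPM}$, or equivalently the eye opening is $2I_{SiPM}$. Under that convention $\mathrm{BER}=Q(\sqrt{\mathrm{SNR}})=\tfrac12\erfc(\sqrt{\mathrm{SNR}/2})$. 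In the write-up I would state this normalization explicitly, defining SNR with $I_{SiPM}$ as the half-distance between the signal levels, and then conclude \eqref{eq:ber}. I would also remark that any other convention changes only a constant factor inside the $\erfc$, so all monotonicity conclusions in $P_{Rx}$, and hence in $\delta$, are unaffected.
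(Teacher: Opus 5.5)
Your proposal is correct and takes the same route as the paper: the paper's proof simply invokes the standard result for baseband OOK with an optimal midpoint threshold under AWGN, and your Gaussian two-hypothesis computation is exactly the content behind that citation. The normalization issue you raise is genuine and the paper glosses over it. With $I_{SiPM}$ taken as the on-level and $\mathrm{SNR}=I_{SiPM}^2/\sigma_{total}^2$, one gets $\tfrac12\erfc(\sqrt{\mathrm{SNR}/8})$, so the stated form $Q(\sqrt{\mathrm{SNR}})$ holds only under the convention you adopt (for example, reading $P_{Rx}$ as the average optical power, so that the on-level is $2P_{Rx}$), and it is right to state this explicitly as you plan to.
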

\begin{proof}
This expression is the standard result for the BER of a baseband OOK signal with an optimal threshold detector under the assumption of additive white Gaussian noise (AWGN). The SNR in the argument of the erfc function directly links the physical layer characteristics encapsulated in \eqref{eq:fully_expanded_snr} to the system's communication performance.
\end{proof}
\subsection{Power Strategy}
\label{sec:power_strategy}
the primary objective of this section is to maximize the total number of transmittable bits within an energy-constrained UOWC network deployed over a circular area $\mathcal{A}$ with radius $R$. This is achieved by jointly optimizing the node density, $\Lambda$, and the transmit power, $P_{Tx}$. The analysis assumes OOK modulation, with energy consumption calculated for the transmission of bit `1' for simplicity.

To reflect practical engineering constraints where the total project budget (e.g., total battery mass or cost) is limited, we adopt a ``top-down"\footnote{Based on the total energy in certain area.} energy model with a fixed total budget, $E_{total}$, distinct from ``bottom-up''\footnote{Based on the energy of each node within a specific region and the total number of nodes.} approaches where total energy scales with node count. Consequently, a fundamental trade-off governs the network's performance. A higher node density $\Lambda$ reduces the average communication distance, thus potentially lowering the required $P_{Tx}$. However, it also divides the fixed total network energy, $E_{total}$, among a larger number of nodes, $N(\Lambda) = \mathcal{A} \cdot \Lambda$, thereby reducing the energy per node, $E_{node}$. Conversely, a lower density increases $E_{node}$ but necessitates a higher $P_{Tx}$ to overcome the increased path loss. This trade-off implies the existence of an optimal node density, $\Lambda^*$, that maximizes network throughput.Given the diverse definitions of network lifetime in underwater environments\cite{10.1145/1464420.1464425,5076276}, where topological heterogeneity and definition methods further complicate performance assessment, this analysis focuses on the spatial average performance. Such an approach benchmarks the macroscopic system capacity and effectively isolates the fundamental gains of the offset pointing strategy.

The total number of bits transmitted per link, $N_b$, is directly proportional to the transmission time, $T_{tx}$, and the system bandwidth, $B$. The transmission time is constrained by the node's energy budget and power consumption:
\begin{align}
    T_{tx}(\Lambda, P_{Tx}) = \frac{E_{node}(\Lambda)}{P_{Tx}} = \frac{E_{total}}{N(\Lambda) \cdot P_{Tx}} = \frac{E_{total}}{\mathcal{A} \cdot \Lambda \cdot P_{Tx}}.
\end{align}
Thus, the objective function to be maximized is:
\begin{align}
    N_b(\Lambda, P_{Tx}) = B \times T_{\text{tx}} = \frac{B \cdot E_{total}}{\mathcal{A} \cdot \Lambda \cdot P_{Tx}}.
    \label{eq:total_bits_objective}
\end{align}

For communication to be reliable, the BER must not exceed a predefined threshold, $\text{BER}_{\text{th}}$. The BER is a function of the SNR, which depends on the received power, $P_{Rx}$. Consequently, for any given node density $\Lambda$ and receiver orientation angle $\delta$, there exists a minimum required transmit power, $P_{\text{Tx,min}}$, that satisfies this quality-of-service (QoS) constraint.

It is important to note that our energy model focuses on the active communication phase. While we acknowledge that practical networks are often event-driven with static power overheads, communication power typically exceeds static consumption by orders of magnitude \cite{11077794}, making it the dominant factor during data transfer. Consequently, static power is treated as a constant background overhead, allowing us to isolate and optimize the energy efficiency of the transmission process itself. This approach assumes the network is in an active communication phase, prioritizing the maximization of transmittable data during the node's operational lifespan.

\begin{figure}[!b]
    \centering
    \includegraphics[width=\columnwidth]{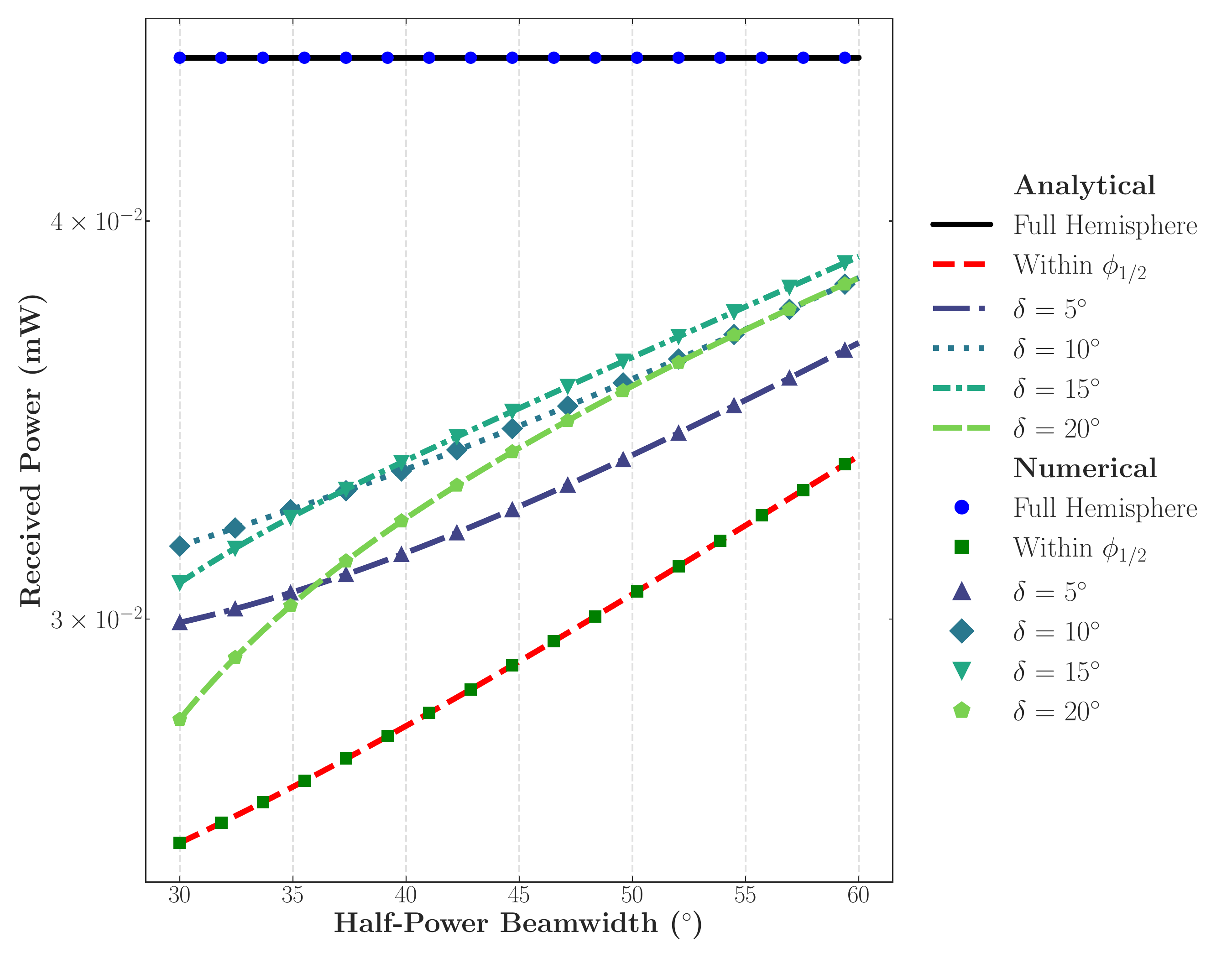}
    \caption{Comparison between closed-form solution and numerical solution for \eqref{eq:40}, \eqref{eq:41}, \eqref{eq:42}}
    \label{fig:Error1}
\end{figure}

To maximize $N_b$ in \eqref{eq:total_bits_objective}, $P_{Tx}$ must be minimized. Therefore, the optimization problem involves operating at precisely this minimum required power, $P_{Tx,min}$. The problem can be formally stated as:
\begin{align}
\begin{aligned}
& \underset{\Lambda, P_{Tx}}{\text{maximize}}
& & N_b = \frac{B \cdot E_{total}}{\mathcal{A} \cdot \Lambda \cdot P_{Tx}} \\
& \text{subject to}
& & \text{BER}(P_{Tx}, \Lambda, \delta) \le \text{BER}_{th}, \\
& & & \Lambda > 0, \\
& & & 0.01 \, \text{W} < P_{Tx} \le P_{Tx,max}.
\end{aligned}
\label{eq:optimization_problem_full}
\end{align}
The lower bound on the transmit power, $P_{Tx} > 0.01 \, \text{W}$, is introduced to account for practical hardware limitations where transmitters have a minimum operational power level.

To quantify the benefits of active receiver alignment, we analyze this optimization problem under two distinct strategies:
\begin{enumerate}
    \item \textbf{Baseline Strategy ($\delta = 0^\circ$):} A non-aligned scenario where the receiver maintains a fixed vertical orientation, making no adjustment for the transmitter's location.
    \item \textbf{Optimized Strategy ($\delta = \delta_{opt}$):} An enhanced scenario where the receiver dynamically adjusts its orientation to an optimal angle, $\delta_{opt}$, that maximizes the received power for the average link geometry.
\end{enumerate}
By solving the optimization problem \eqref{eq:optimization_problem_full} for both strategies, we can evaluate the resultant power savings and the increase in data capacity.

\begin{table}[!t]
    \centering
    \caption{SIMULATION PARAMETERS}
    \begin{tabular}{@{} l c c @{}}
        \toprule
        Parameter & Symbol & Value \\
        \midrule
        SiPM PDE & $\eta$ & 0.31 \\
        SiPM gain & $G$ & \SI{1e6}{A/W} \\
        SiPM FOV & $\phi_{FoV}$ & \SI{120}{\degree} \\
        SiPM cross-talk & $P_{ct}$ & \SI{8}{\percent} \\
        Thermal resistance & $R_L$ & \SI{50}{\ohm} \\
        Bandwidth & $B$ & \SI{1}{Mbps} \\
        Dark current & $I_d$ & \SI{154}{nA} \\
        LED half-angle in optimization problem & $\phi_{1/2}$ & \SI{60}{\degree} \\
        LED wavelength & $\lambda$ & \SI{450}{nm} \\
        LED power & $P_{Tx}$ & \SI{8}{W} \\
        % 1. 在这里使用 \footnotemark，它会自动生成上标数字
        Total beam attenuation\footnotemark & $c(\lambda)$ & \SI{0.151}{m^{-1}}\cite{elfikky2024underwater} \\
        Sea-level solar radiation & $E_{sun}(z=0)$ & \SI{1000}{W/m^2} \\
        Excess noise & $F$ & 1.08 \\
        Directional dependence factor & $L_f$ & 4 \\
        Reflectance of solar radiation & $\zeta_r$ & 1.25 \\
        Bandpass filter window & $\Delta\lambda$ & \SI{50}{nm} \\
        Aperture convex lens & $D$ & \SI{30}{cm} \\  
        Solar radiation attenuation coefficient & $\epsilon$ & 0.2 $\mathrm{m}^{-1}$ \\  
        \bottomrule
    \end{tabular}
    \label{tab:Set up}
\end{table}

\footnotetext{Although the extinction coefficient $c(\lambda)$ varies with environmental conditions, the path loss term $e^{-c(\lambda)L_{dis}}$ acts as a multiplicative factor independent of the receiver orientation $\delta$. Consequently, variations in $c(\lambda)$ scale the magnitude of the received power profile but do not alter the convexity of the objective function with respect to $\delta$. Therefore, the geometric optimality of the proposed offset strategy remains robust to changes in water turbidity. A representative value of $c(\lambda) = 0.151 \, \mathrm{m}^{-1}$ is thus adopted in the simulation without loss of generality.}

\section{Simulation Results}\label{sec:SR}

This section evaluates the performance of our proposed optimization strategy through simulation results. Specifically, we examine how key network parameters influence received power, BER, and overall system efficiency.

The simulation parameters, listed in Table \ref{tab:Set up}, are established based on the specifications of publicly available SiPM detectors and related work \cite{saksvik2025sipm,smart2005underwater,zeng2016survey}.

Moreover, for a comprehensive evaluation, our simulations encompass three distinct levels, which represent a wide spectrum of operational contexts under progressively challenging channel conditions. These levels are defined by the internodal communication distance, which is a primary determinant of path loss:
\begin{itemize}
    \item \textbf{Level 1 (0 -- 50 m):} Representing short-range, high-density applications such as localized sensor grids or communication between tightly-grouped autonomous underwater vehicles (AUVs), this regime is characterized by relatively low path loss.
    
    \item \textbf{Level 2 (50 -- 500 m):} Modeling medium-range links, typical for sparsely deployed underwater infrastructures or communication across large geological features, this level introduces significant channel attenuation as a primary performance-limiting factor.
    
    \item \textbf{Level 3 (500 -- 6000 m):} Encompassing long-range and deep-sea applications, such as vertical links from the seafloor to a surface station, this level presents an extreme challenge and a stress test for system viability due to severe signal attenuation.
\end{itemize}
    Fig.~\ref{fig:Error1} compares the analytical and Monte Carlo simulation results for the received power, $P_{Rx}$, under various analysis modes. The close agreement between the analytical and simulation data validates the accuracy of our proposed model.
    
    %%%%%%%%%%%%%%%%%%%%%%%%%%%%%%%%%%%%%%%%%%%%%%%%%%%%%%%%%%%%%%%%%

    %%%%%%%%%%%%%%%%%%%%%%%%%%%%%%%%%%%%%%%%%%%%%%%%%%%%%%%%%%%%%%%%%

Analyzing the system performance across these three levels (Fig.~\ref{Sim1}, \ref{Sim2}, and~\ref{Sim3}) enables a thorough assessment of the robustness and scalability of our proposed strategies under conditions ranging from benign to highly adverse.
\begin{figure*}[!t] % 使用 figure* 环境，并设置 [t] 或 [b] 控制位置
    \centering
    \includegraphics[width=\textwidth]{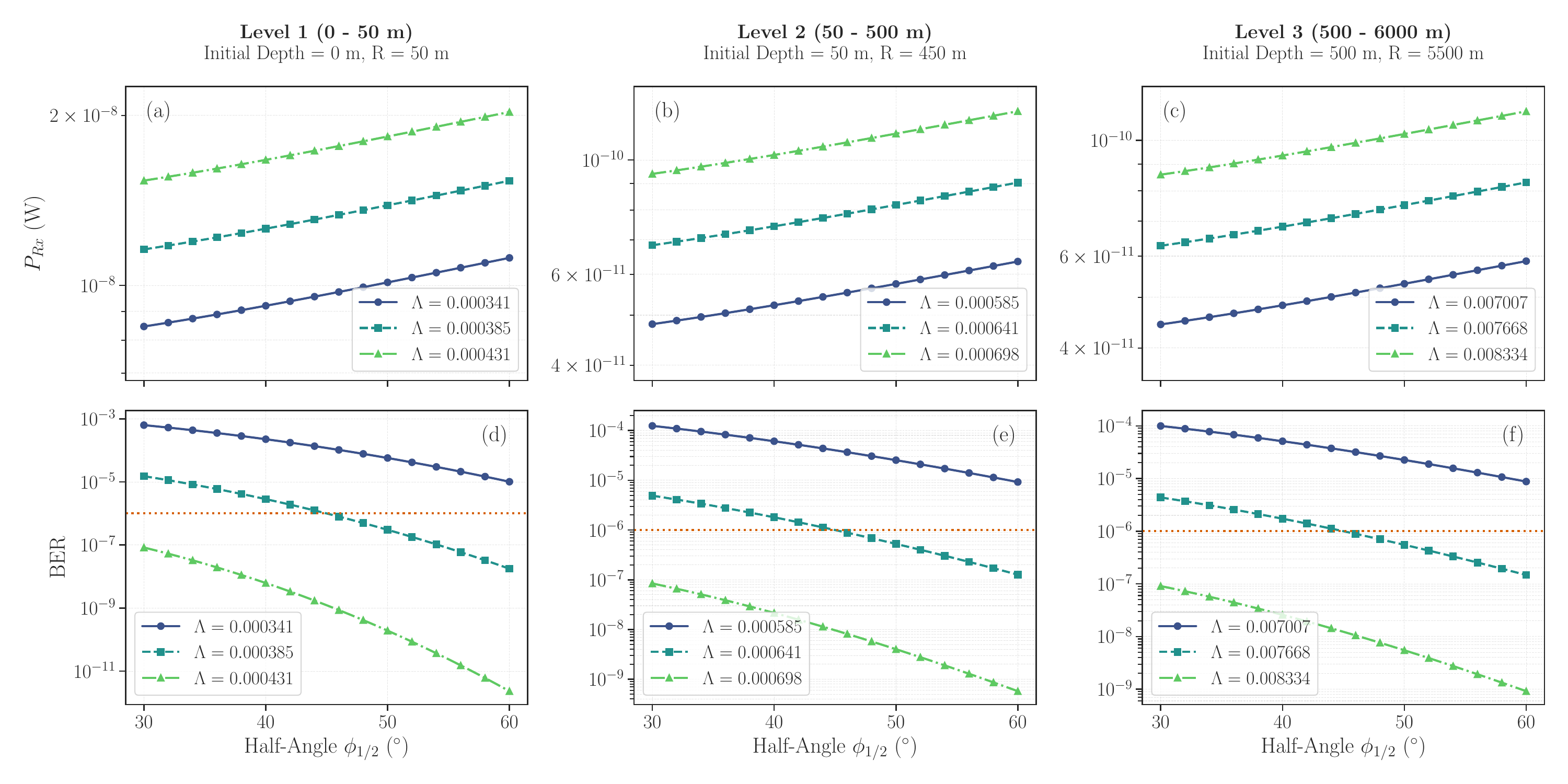} % 设置图片宽度为整页宽度
    \caption{Simulation results showing the effects of $\phi_{1/2}$ on $P_{Rx}$ and BER in different levels ($\delta=0^\circ$)}
    \label{Sim1}
\end{figure*}

Fig.~\subref*{fig:sub1} and~\subref*{fig:sub2} present the normalized received power, $P_{Rx}$, as a function of the half-power beamwidth, $\phi_{1/2}$, and the pointing offset, $\delta$, using heatmaps with corresponding projection plots. These results reveal that introducing a deliberate pointing offset can significantly enhance $P_{Rx}$, and the optimal offset angle is found to be directly proportional to $\phi_{1/2}$. This proportionality is further examined in Fig.~\subref*{fig:sub3}, which displays the relative received power versus the offset angle for four distinct beamwidths, clearly identifying the optimal operational points. These findings provide strong empirical validation for our analytical model.
    
The underlying mechanism for this enhancement lies in the integration of optical power across the receiver's aperture. For a Lambertian source, the irradiance profile is highly non-uniform, peaking at the beam's center and decaying sharply towards the periphery. Consequently, under perfect alignment ($\delta=0^\circ$), the central region of the receiver is maximally irradiated, while the outer regions capture substantially less power. By introducing a small pointing offset, $\delta > 0^\circ$, a beam region with more uniform, albeit lower-peak, irradiance is projected onto the receiver. This strategy sacrifices peak irradiance at the center for a significant gain in power captured at the periphery. As a result, the total received power, the spatial integral of irradiance over the aperture is maximized.

Physically, this method effectively substitutes the low-irradiance beam tail at the receiver's periphery with a higher-irradiance annular section of the beam. An excessive offset, however, will cause the beam to miss the receiver, leading to a sharp decrease in $P_{Rx}$. The positive correlation observed between the optimal offset $\delta$ and the beamwidth $\phi_{1/2}$ arises because broader beams exhibit a more gradual irradiance decay, necessitating a larger geometric offset to align the region of steepest power gradient with the receiver's edge. From a practical standpoint, maintaining perfect transmitter-receiver (Tx-Rx) alignment is often infeasible or prohibitively expensive in dynamic environments subject to factors such as water currents, platform vibrations, and node drift. Our findings advocate for an intentional misalignment strategy where an optimal offset $\delta$ is preset. This approach offers greater robustness to dynamic pointing errors and random jitter compared to the pursuit of perfect alignment, yielding smaller fluctuations in $P_{Rx}$ and thus enhancing link stability and reliability. Furthermore, the relaxed pointing accuracy requirements can facilitate the use of fixed-pointing transceivers, thereby reducing the hardware's size, weight, and power (SWaP) consumption, as well as its cost and deployment complexity.

Fig.~\ref{Sim1} presents the received power, $P_{Rx}$, and BER performance under ideal alignment ($\delta = 0^\circ$) for varying network parameters. The simulations reveal that both metrics improve with increasing node density, $\Lambda$, and transmitter half-power beamwidth, $\phi_{1/2}$. Furthermore, a greater vertical node separation, $R$, necessitates a higher $\Lambda$ to maintain an acceptable BER.

The underlying principle in random underwater networks is that a higher $\Lambda$ reduces the average internodal distance, thereby mitigating path attenuation. This improvement in received signal strength consequently leads to a lower BER. Because the three-dimensional communication path length is determined by both horizontal and vertical node separation ($R$), a large $R$ can substantially elongate the path, even for nodes with close horizontal proximity. Given that signal strength decays significantly with total path length, a substantial $R$ introduces considerable attenuation. To counteract this degradation, a denser network deployment is required to reduce the average horizontal distance, thus keeping the total path length within an effective communication range.
%%%%%%%%%%%%%%%%%%%%%%%%%%%%%%%%%%%%%%%%%%%%%%%%%%%%%%%%%%%%%%%%%
\begin{figure*}[!t] % 使用 figure* 环境，并设置 [t] 或 [b] 控制位置
    \centering
    \includegraphics[width=\textwidth]{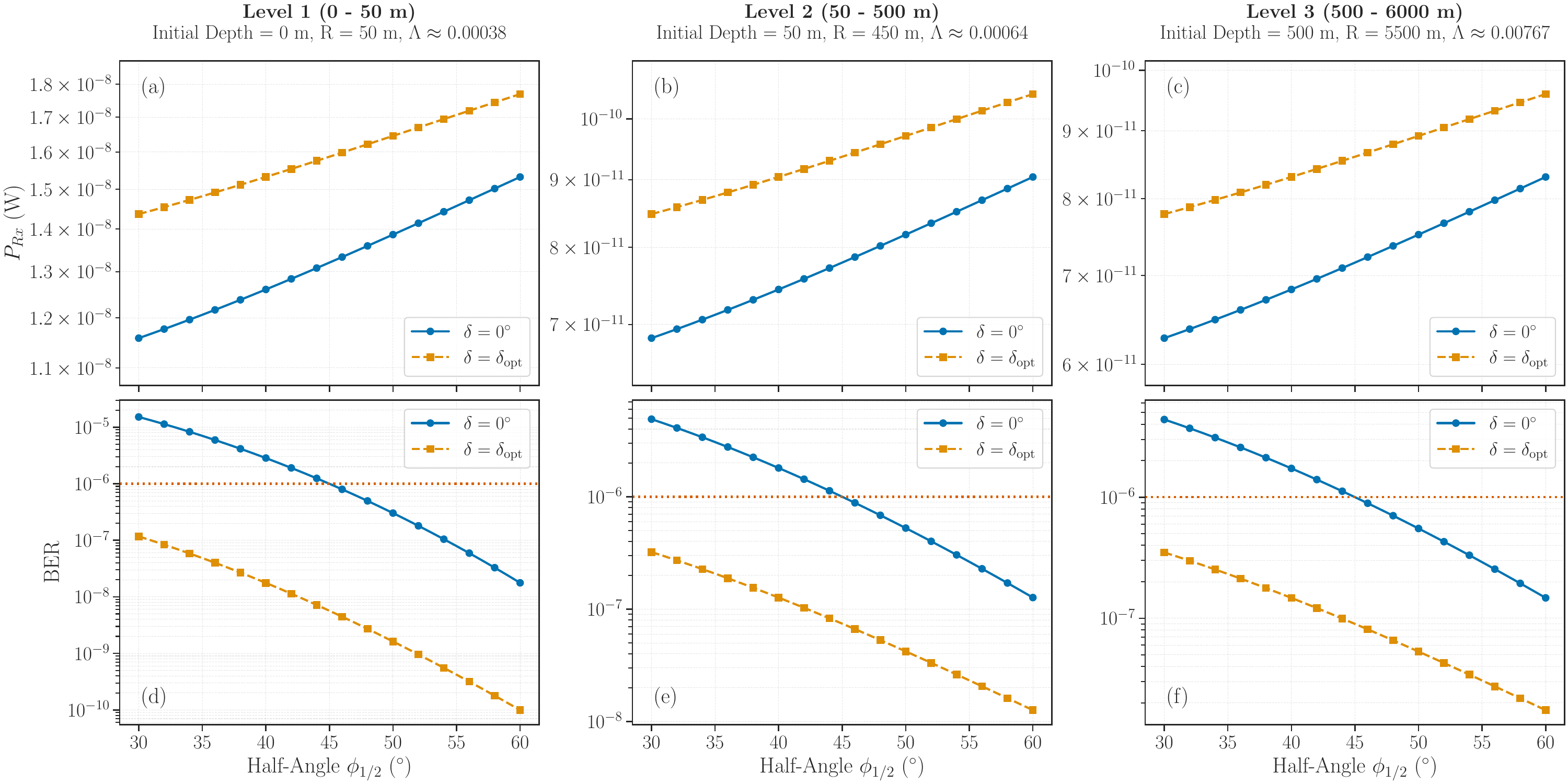} % 设置图片宽度为整页宽度
    \caption{Simulation results showing the effect of $\phi_{1/2}$ on $P_{Rx}$ and BER when $\delta=0^\circ$ and $\delta=\delta_{opt}$ in different levels}
    \label{Sim2}
\end{figure*}
    %%%%%%%%%%%%%%%%%%%%%%%%%%%%%%%%%%%%%%%%%%%%%%%%%%%%%%%%%%%%%%%%%

Fig.~\ref{Sim2} depicts a comparative analysis of $P_{Rx}$ and BER performance for offset-optimized and ideally aligned links under identical node densities ($\Lambda$). The results demonstrate that offset optimization yields an approximately 20\% enhancement in received power, although this gain diminishes with increasing $\phi_{1/2}$, a trend attributed to the more uniform energy distributions of broader beams. The improvement in BER is even more pronounced, typically exhibiting a reduction of one to two orders of magnitude. Moreover, the range of permissible $\phi_{1/2}$ values that satisfy the operational threshold (BER $< 10^{-6}$) is significantly expanded with offset optimization. Collectively, these findings indicate a reduced system sensitivity to variations in $\phi_{1/2}$, resulting in more gradual link degradation despite manufacturing tolerances or operational changes in beamwidth, thereby enhancing hardware robustness.

This enhanced robustness offers significant engineering and economic advantages. For example, improved $P_{Rx}$ and reduced BER enable the use of lower-power transmitters or less sensitive receivers, which are often more cost-effective, while maintaining the same performance level. The relaxed pointing precision simplifies mechanical design and attitude control systems, facilitates deployment, and lowers maintenance costs associated with environmental disturbances. Ultimately, these benefits broaden the operational envelope of UOWC systems, enabling the deployment of cost-effective networks in challenging environments.
\begin{figure*}[!t] % 使用 figure* 环境，并设置 [t] 或 [b] 控制位置
    \centering
    \includegraphics[width=\textwidth]{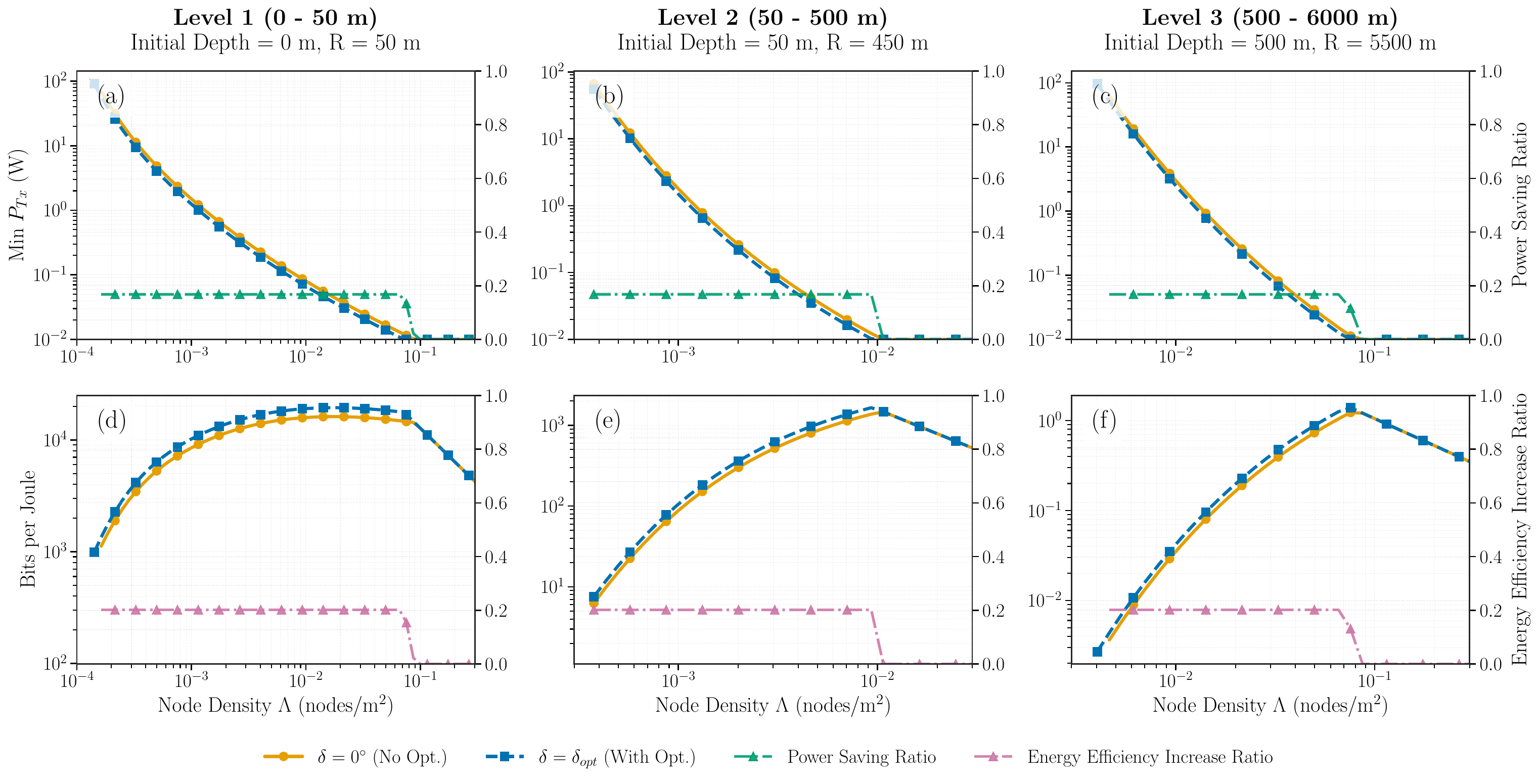} % 设置图片宽度为整页宽度
    \caption{In different levels, when $\delta=0^\circ$ and $\delta=\delta_{opt}$, the relationship between the required $P_{Tx}$ and the corresponding total number of transmitted bits and node density to satisfy BER = $10^{-6}$ ($\phi_{1/2}=60^\circ$)}
    \label{Sim3}
\end{figure*}
    %%%%%%%%%%%%%%%%%%%%%%%%%%%%%%%%%%%%%%%%%%%%%%%%%%%%%%%%%%%%%%%%%
\begin{figure}[!b] % 使用 figure* 环境，并设置 [t] 或 [b] 控制位置
    \centering
    \vspace{-3mm}
    \includegraphics[width=\columnwidth]{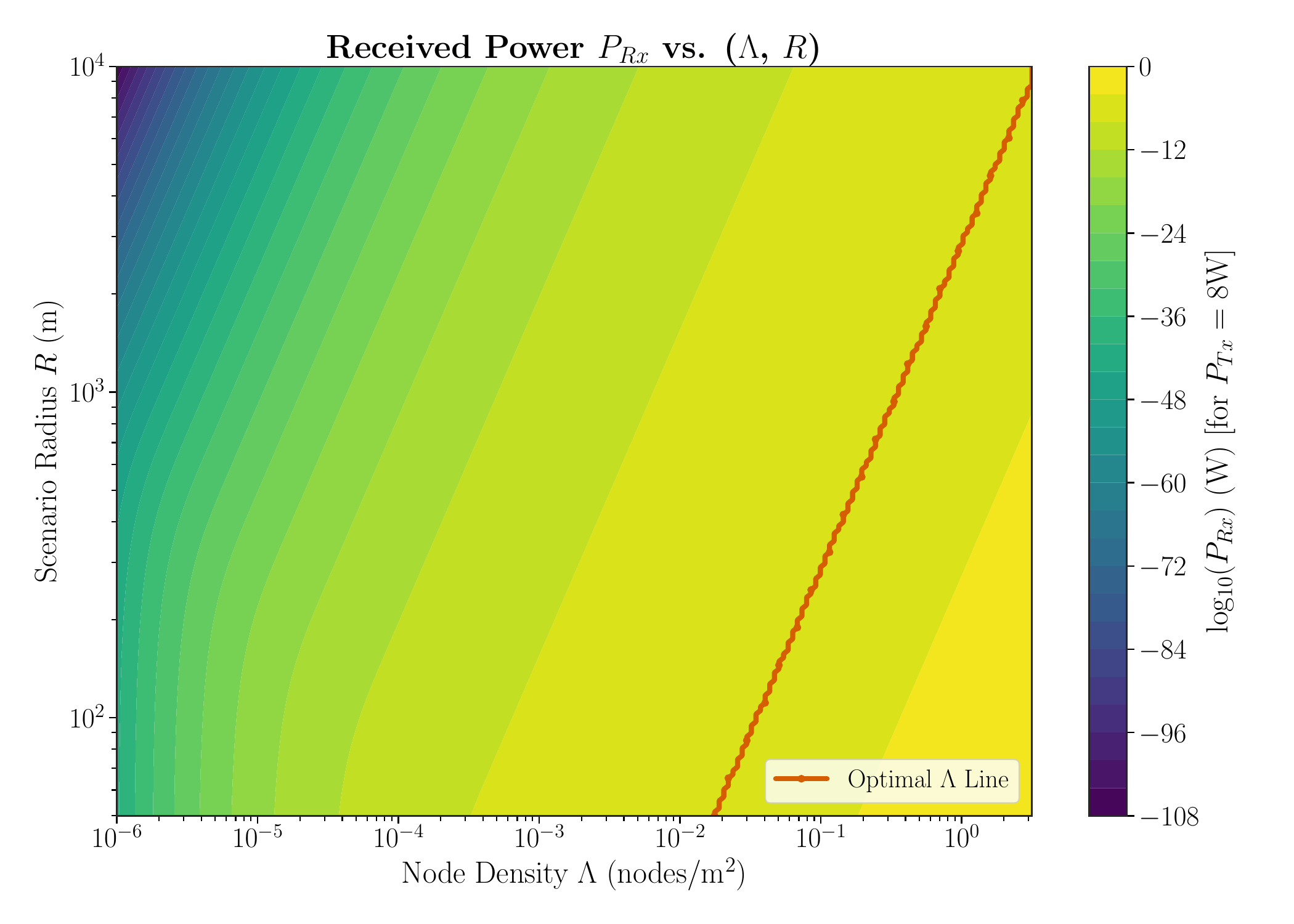} % 设置图片宽度为整页宽度
    \caption{Heatmap of received power $P_{Rx}$ as a function of  $R$ and $\Lambda$ with transition line}
    \label{derive}
\end{figure}
    %%%%%%%%%%%%%%%%%%%%%%%%%%%%%%%%%%%%%%%%%%%%%%%%%%%%%%%%%%%%%%%%%
Fig.~\ref{Sim3} depicts the results of the power optimization problem, governed by the interplay between network physics and a hardware constraint of $P_{Tx} \ge 0.01\,\text{W}$. The results reveal a characteristic dual-phase behavior contingent on node density, $\Lambda$. In Phase 1 (low-density regime), increasing $\Lambda$ substantially reduces the average internodal distance, enabling a sharp decrease in the required transmission power, $P_{Tx}$, and a corresponding increase in total transmittable data. Conversely, Phase 2 (high-density regime) marks a point of diminishing returns, where further network densification yields negligible reductions in path length. In this phase, the escalating energy cost of adding nodes surpasses the marginal power savings per link, causing network throughput and energy efficiency to decline from their theoretical peaks.
    %%%%%%%%%%%%%%%%%%%%%%%%%%%%%%%%%%%%%%%%%%%%%%%%%%%%%%%%%%%%%%%%%

    %%%%%%%%%%%%%%%%%%%%%%%%%%%%%%%%%%%%%%%%%%%%%%%%%%%%%%%%%%%%%%%%%

The key insight lies in the interaction between this phase transition and the hardware power floor. As detailed in the analysis accompanying Fig.~\ref{derive}, the inflection point marking the onset of Phase 2 shifts to a higher $\Lambda$ as the $R$ increases. For levels 2 and 3, characterized by large $R$, this critical density is exceptionally high. However, the optimization process, which reduces $P_{Tx}$ with increasing $\Lambda$, drives the transmission power to its 0.01\,W floor long before this intrinsic inflection point is reached. This hardware-imposed limitation is evident in Fig.~\ref{Sim3}: for levels 2 and 3, the energy efficiency (Bits per Joule) peaks and then declines as soon as $P_{Tx}$ is fixed. This downturn does not signify the onset of Phase 2 but is a direct consequence of the power floor, where per-link power savings cease while the total number of power-consuming nodes continues to rise. In contrast, level 1, with its smaller radius, reaches its natural performance peak at a moderate density before the power constraint becomes active. Therefore, the performance ceiling observed for levels 2 and 3 is a hardware-imposed constraint, not the network's intrinsic limit.

This boundary establishes a critical design guideline for future UOWC evolution. It reveals that increasing network density yields diminishing returns under current hardware capabilities. To unlock the capacity potential of ultra-dense networks, a reduction in the transceiver's baseline power consumption is a prerequisite. Recent advancements in micro-LED technology \cite{MicroLED_Ref} and low-power underwater sensor networks \cite{ali2022recent} suggest that reducing the transmission floor to the milliwatt regime is becoming feasible. Our model quantifies this relationship, serving as a benchmark to determine the necessary hardware specifications required to shift the optimal density threshold and achieve higher aggregate throughput.

For energy-constrained systems like battery-powered underwater sensor networks, this analysis challenges the intuitive notion that more nodes are always better. Instead, it demonstrates the existence of an optimal node density, $\Lambda_{opt}$, that maximizes key performance indicators such as network lifetime or total data throughput. This optimum represents a peak in the return on investment (ROI) for network deployment, informing a more strategic planning process. It allows for a deliberate balance between the initial deployment cost (proportional to the number of nodes) and long-term operational costs (proportional to total energy consumption), thereby optimizing the total lifecycle cost. The offset-pointing strategy consistently enhances this outcome: its nearly 20\% reduction in power consumption can directly translate into a commensurate extension of network lifetime or an increased data capacity under the same energy budget, highlighting its broad applicability and value.

\section{Conclusion and future work}\label{Sec:Conc}
This study presents a significant advancement in the stochastic geometry analysis of UOWC networks. Our contributions include a novel three-dimensional TPPP model that accurately reflects the anisotropic nature of underwater environments, a differential energy analysis framework enabling a more granular understanding of energy consumption dynamics, closed-form expressions for key performance indicators, and an optimization strategy for maximizing data throughput under energy constraints. Simulation results validate the analytical model and demonstrate the substantial benefits of employing optimized receiver orientation strategies, resulting in significant improvements in received power, BER, and overall network capacity. These findings highlight the critical role of considering spatial randomness and energy limitations in the design and deployment of efficient and sustainable UOWC networks. 

In terms of future work, we plan to extend this framework to incorporate layered media models, specifically addressing complex hydrological scenarios such as haloclines and river-to-sea transitions where significant gradients in absorption and scattering exist. Additionally, investigations into different modulation schemes and the development of distributed resource allocation algorithms for multi-node UOWC networks will be pursued. Furthermore, experimental validation of the proposed model and optimization strategies in realistic underwater environments remains a valuable direction.

\bibliographystyle{IEEEtran}
\bibliography{Ref}

\begin{IEEEbiography}
    [{\includegraphics[width=1in,height=1.32in,clip,keepaspectratio]{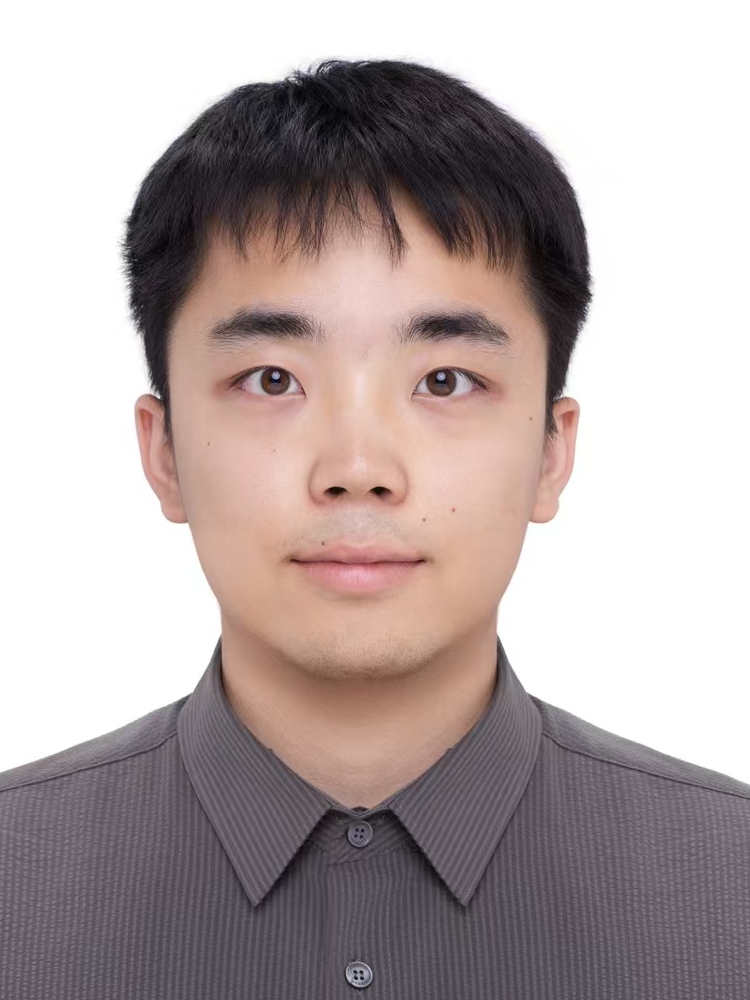}}]{Qiyu Ma}
(Student Member, IEEE) is a undergraduate student in the department of electronic engineering, Tsinghua University, currently a visiting student at the communication theory lab, KAUST. Research focuses on wireless communications, deep learning, and generative AI.
\end{IEEEbiography}
\begin{IEEEbiography}
    [{\includegraphics[width=1in,height=1.32in,clip,keepaspectratio]{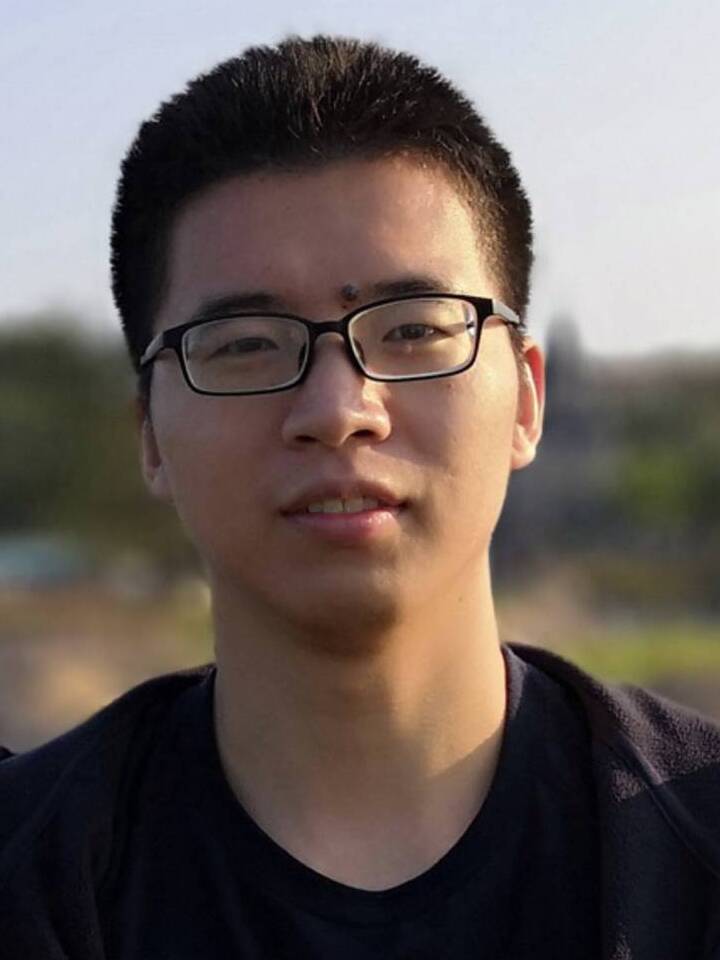}}]{Jiajie Xu}
(Member, IEEE) received his M.Sc. and Ph.D. degrees from Yanshan University and King Abdullah University of Science and Technology (KAUST) in 2019 and 2023. He is a postdoctoral research fellow in the communication theory lab at KAUST. His current research interests include underwater wireless acoustic communication, underwater target detection, underwater cooperative networks, maritime communication, space-air-ground-sea integrated communication systems, joint sensing and communication systems, stochastic geometry, and energy-harvesting wireless networks.
\end{IEEEbiography}
\begin{IEEEbiography}
    [{\includegraphics[width=1in,height=1.33in,clip,keepaspectratio]{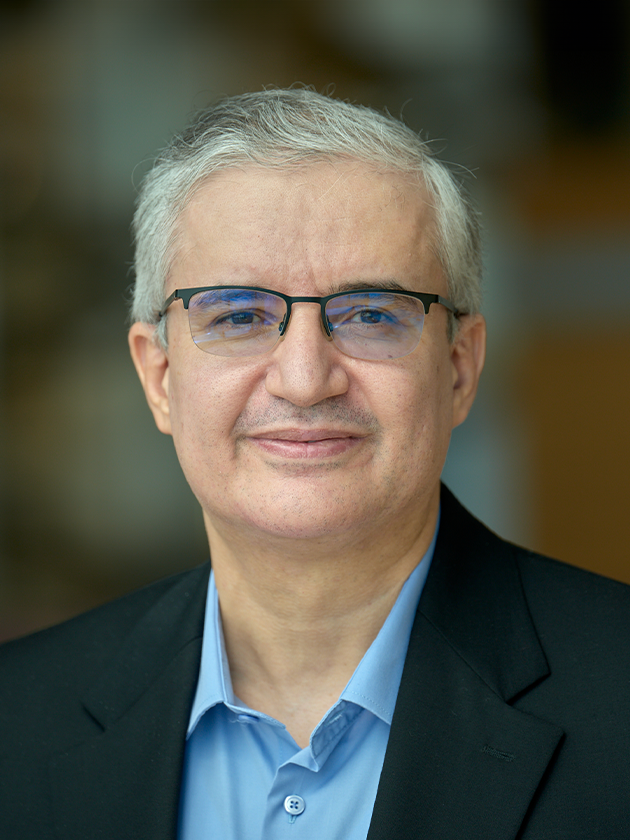}}]{Mohamed-Slim Alouini}
(Fellow, IEEE) received his Ph.D. degree in electrical engineering from the California Institute of Technology, Pasadena, in 1998. He served as a faculty member at the University of Minnesota, Minneapolis, then at Texas A$\&$M University at Qatar, Doha, before joining KAUST as a professor of electrical engineering in 2009. His current research interests include the modeling, design, and performance analysis of wireless communication systems.
\end{IEEEbiography}
\end{document}